\documentclass[a4paper,11pt]{article}
\pdfoutput=1 

\usepackage{jheppub2} 
\usepackage{bbm,bm,graphicx,mathtools,color,slashed,hyperref}
\usepackage{amssymb,amsmath}
\usepackage{braket}
\usepackage{subfig}

\newcommand{\diff}{\mathrm{d}}

\newcommand{\Diff}{{\mathcal{D}}}

\newcommand{\tr}{\mathrm{tr}}
\newcommand{\im}{\mathrm{i}}

\newcommand{\rme}{\mathrm{e}}

\newcommand{\sfB}{\mathsf{B}}
\newcommand{\ord}{\operatorname{ord}}
\newcommand{\lcm}{\operatorname{lcm}}

\usepackage{amsthm}

\theoremstyle{definition}
\newtheorem{claim}{Claim}
\newtheorem{assumption}{Assumption}
\newtheorem{definition}{Definition}
\newenvironment{definitionprime}{%
  \addtocounter{definition}{-1}% 
  \begin{definition}%
}{%
  \end{definition}%
}

\title{
Monopoles, Center Vortices, Confinement in (3+1)d, and the Lens-Space Twisted Partition Function
}

\author[1]{Yui Hayashi}
\emailAdd{yui.hayashi@yukawa.kyoto-u.ac.jp}
\affiliation[1]{Department of Physics, The University of Tokyo, 7-3-1 Hongo, Bunkyo-ku, Tokyo 113-0033, Japan}

\author[2]{and Yuya Tanizaki}
\emailAdd{yuya.tanizaki@yukawa.kyoto-u.ac.jp}
\affiliation[2]{Yukawa Institute for Theoretical Physics, Kyoto University, Kyoto 606-8502, Japan}

\preprint{YITP-26-63}

\abstract{
We propose the gauge-invariant criteria of center-vortex condensation and monopole condensation using the $\mathbb{Z}_N^{[1]}$-symmetry twisted partition functions: 
The torus twisted partition function characterizes the center-vortex condensation, and the lens-space twisted partition function characterizes the monopole condensation. 
To justify our proposal, we study how these twisted partition functions behave in the adjoint Higgs phase and show that their leading nontrivial contributions come from the center vortex and monopole, respectively. 
Using the techniques of topological field theories, we uncover the relation between the center-vortex and monopole condensations, and in particular, we prove that the gapped phase with the center-vortex condensation necessarily shows the monopole condensation, too. 
We then study a center-vortex model with monopoles as an illustrative example, and the higher-charge monopole condensation gives an example of the symmetry fractionalization, which goes beyond the conventional Wilson-'t~Hooft classification. 
}

\begin{document}

\maketitle

\section{Introduction}
\label{sec:introduction}

In the confinement phase, the color-electric flux emitted from the test quark is collimated into a string, which produces the linear inter-quark potential, and one can measure it by observing the area law for the Wilson loops. 
Excitations of magnetic objects are the key for the intuitive explanation of confinement, and the two most standard scenarios are the monopole condensation and the center-vortex proliferation~\cite{Nambu:1974zg, Mandelstam:1974pi,  Polyakov:1975rs, tHooft:1977nqb, Cornwall:1979hz, Nielsen:1979xu, Ambjorn:1980ms}.

The monopole condensation scenario explains confinement via the electromagnetic dual of superconductivity: As a consequence of the condensation of magnetically charged particles, the magnetic field is screened, and the electric field is expelled, leading to the formation of the electric flux tube between two test quarks. 
The picture of Abelian dominance and dual Meissner effect~\cite{Ezawa:1982bf, Suzuki:1988yq, Suganuma:1993ps, Kondo:1997pc} has been tested in the numerical simulation~\cite{Kronfeld:1987vd, Suzuki:1989gp}, and the observation tells the monopole components of the lattice gauge fields give a large contribution for the confining string in a suitable gauge choice, e.g. the maximally Abelian gauge.

The center-vortex scenario gives a complementary picture for the confinement phase. 
The $SU(N)$ Yang-Mills theory (possibly with adjoint matters) enjoys the $\mathbb{Z}_N$  $1$-form center symmetry, $\mathbb{Z}_N^{[1]}$, and the Wilson loop gives an order parameter for this global symmetry. 
In the deconfined phase, where the $1$-form symmetry is spontaneously broken, there exists a vortex soliton around which the Wilson loop gets the nontrivial $\mathbb{Z}_N$ Aharonov–Bohm phase. 
This is called the center vortex, which is a counterpart of the domain-wall excitation for the case of the usual discrete symmetry breaking. 
Then, the proliferation of center vortices would restore the spontaneously broken $1$-form symmetry and explain quark confinement. 
Analogous to the Abelian dominance in the maximal Abelian gauge, the lattice numerical simulation observes the dominance of the center vortex configurations for the confining force in a maximal center gauge \cite{DelDebbio:1996lih, Langfeld:1998cz, Kovacs:1998xm, Engelhardt:1999fd, deForcrand:1999our}.

Numerical success of both pictures suggests the monopole and center-vortex condensation scenarios are related for the case of $(3+1)$d Yang-Mills theory. 
Indeed, one can naturally incorporate the monopoles into the center-vortex ensembles: Monopole worldlines serve as the junctions in the network of center-vortex worldsheets~\cite{DelDebbio:1997ke, Ambjorn:1999ym, deForcrand:2000pg}. 
Local instanton fluctuations can be produced only with such non-orientable vortex worldsheets in the center-vortex picture~\cite{Engelhardt:1999xw, Reinhardt:2001kf, Cornwall:1999xw}, which highlights the importance of monopoles in the vortex scenario. 
In a compactified geometry, such a center vortex with the monopole junction can be realized as the self-dual solution of the classical Yang-Mills equation~\cite{Hayashi:2024yjc} (see also~\cite{GarciaPerez:1989gt, GarciaPerez:1992fj, Ford:2002pa, Ford:2003vi, Itou:2018wkm, DasilvaGolan:2022jlm, Wandler:2024hsq} for related fractional instantons), and it plays the essential role to explicitly interpolate the semiclassical description of confinement via center vortices on $\mathbb{R}^2\times T^2_{\mathrm{twisted}}$~\cite{Tanizaki:2022ngt, Tanizaki:2022plm, Hayashi:2023wwi, Hayashi:2024gxv, Hayashi:2024qkm, Hayashi:2024psa, Guvendik:2024umd, Hayashi:2025doq} (see also \cite{Yamazaki:2017ulc, Cox:2021vsa}) and the one via monopole instantons on $\mathbb{R}^3\times S^1$~\cite{Unsal:2007vu, Unsal:2007jx, Unsal:2008ch, Shifman:2008ja, Poppitz:2012nz, Poppitz:2012sw, Davies:2000nw, Davies:1999uw}.

Recently, the connection between the monopoles and center vortices has been revisited in the context of the $4$d lattice $\mathbb{Z}_N$ gauge theories~\cite{Nguyen:2024ikq, Giansiracusa:2025hfj}. 
The authors of Refs.~\cite{Nguyen:2024ikq, Giansiracusa:2025hfj} found the gapless phase with an emergent massless photon if the vacuum configurations are filled with center vortices without monopole junctions. 
In the confinement phase, on the other hand, both monopoles and center vortices are abundant in the vacuum, which strongly suggests the importance of monopoles for the center-vortex scenario of the confinement mechanism. 
One of our motivation in this paper is to give a formal argument for their numerical observation in a way that is extendable to non-Abelian gauge theories. 

One of the obstacles for this purpose is the lack of gauge-invariant definition for monopoles and center vortices in the $SU(N)$ Yang-Mills theory. 
Therefore, we first give a ``proposal'' to characterize the magnetic-charge condensation and the center-vortex condensation in terms of the symmetry-twisted partition function with the $\mathbb{Z}_N$ $1$-form symmetry: The magnetic-charge condensation is defined by using the partition function on the lens space, $L(qN;1)\times S^1$, and the center-vortex condensation is defined by using the partition function on the torus, $T^4$.  
By construction, our definition is gauge invariant, and we shall explain why this gives a reasonable definition by calculating those partition functions in adjoint Higgs phases. 
Let us note that we do not give gauge-invariant definition of monopoles and center vortices themselves, but we propose the criteria for condensations. 

Once we get a gauge-invariant definition, we can discuss their relation by using the technique of the topological field theory under the presence of the mass gap. 
We then prove that the center-vortex condensation requires the magnetic-charge condensation, but we also find that the condensed magnetic charge does not have to be minimal ones. 
It turns out that the lens-space twisted partition function provides us detailed information of the condensed magnetic charge, which has been missed in the torus partition function and also the conventional Wilson-'t Hooft classification. 
We also give a concrete demonstration of these abstract observations by considering the center-vortex percolation model with monopole worldlines. 

\section{Criteria of center-vortex/monopole condensation via \texorpdfstring{$\mathbb{Z}_N^{[1]}$}{ZN[1]} symmetry}
\label{sec:Definitions}

$SU(N)$ Yang-Mills theory (with adjoint matters) enjoys the $\mathbb{Z}_N$ $1$-form symmetry, which is denoted by $\mathbb{Z}_N^{[1]}$: There exists the codimension-$2$ topological operator with the $\mathbb{Z}_N$ fusion rule, and it acts on the Wilson loops by the $\mathbb{Z}_N$ phase rotation~\cite{Gaiotto:2014kfa, Kapustin:2014gua, Sharpe:2015mja}. 
For local field theories with the $\mathbb{Z}_N^{[1]}$ symmetry, we can introduce its background gauge field $B$ described by the $\mathbb{Z}_N$ $2$-form gauge field and compute its partition function, $Z_{M_4}[B]$, on the general closed $4$-manifolds $M_4$. 
This operation is equivalent to the insertion of the codimension-$2$ topological operators/defects on the Poincar\'e dual codim-$2$ cycle of $B$.

We shall propose the gauge-invariant criteria for the monopole and center-vortex condensation using the twisted partition functions in Sec.~\ref{sec:detect_condensation} after the brief explanation of the properties of $(3+1)$d gapped phases in view of topological field theories in Sec.~\ref{sec:TQFT_basics}. 
We then explain in Sec.~\ref{sec:justification} why our proposals are physically sensible by computing the symmetry-twisted partition functions in the Higgs phase, where the dynamical monopoles/center vortices are well defined. 
Section~\ref{sec:main_result} is one of the main results of this paper, which claims that gapped phases with the center-vortex condensation necessarily have the monopole condensation, too. 

% In this section, we propose plausible gauge-invariant criteria for the monopole and center-vortex condensations in terms of the $\mathbb{Z}_N$ $1$-form symmetry.
% %A difficulty in studying confinement scenarios of $4$d gauge theories is the lack of gauge-invariant formulations for most proposed mechanisms. To establish rigorous results, clarifying these definitions is essential. 
% Because the $\mathbb{Z}_N^{[1]}$ defect intuitively acts as a probe center vortex, it naturally enables the definition of center-vortex condensation. 
% %Conversely, formulating monopole condensation is highly non-trivial, as it conventionally relies on abelianization. In this paper, we propose a definition of ``monopole condensation'' strictly in terms of $\mathbb{Z}_N^{[1]}$ defects. 
% For monopoles, a mathematical trick is necessary. 
% By introducing a specialized configuration of these $\mathbb{Z}_N^{[1]}$ defects, we can make a setup in which the presence of monopoles (or more accurately, sources of magnetic flux) is inevitably induced if abelianization occurs. Consequently, a criterion of monopole condensation can be given by the non-suppression of this twisted partition function.

%\blue{When the system is gapped, the theory will flow to a TQFT with $\mathbb{Z}_N$ $1$-form symmetry , and the classification of such theories is known as the Wilson-'t~Hooft classification. (Under proper assumptions,) the (locally-unique-gapped) TQFTs can be described by the $\mathbb{Z}_N^{[1]} \rightarrow \mathbb{Z}_n^{[1]}$ spontaneous breakdown with $\mathbb{Z}_n^{[1]}$ SPT stacking.}

\subsection{\texorpdfstring{$(3+1)$}{(3+1)}d  gapped phases, local TQFTs, and remote detectability}
\label{sec:TQFT_basics}

Let us consider a $(3+1)$-dimensional local and unitary relativistic quantum field theory (QFT) with the $\mathbb{Z}_N^{[1]}$ symmetry, and we here assume that it has a nonzero mass gap, i.e. all the local correlation functions decay exponentially, and also that the ground state has no local degeneracy. 
For this setup, the low-energy effective theory is described by a topological quantum field theory (TQFT), which also satisfies locality and unitarity. 
Let us write $Z^{\mathrm{(TQFT)}}_{M_4}[B]$ for the partition function of the low-energy TQFT with the background gauge field $B$, and then the full QFT partition function should behave as 
\begin{equation}
    Z_{M_4}[B]\xrightarrow{\mathrm{vol.}\to\infty} 
    \exp\left(-\int_{M_4}\sqrt{g_{M_4}}\diff^4 x\Bigl(\Lambda+\frac{1}{2\kappa} R_{M_4}+\cdots\Bigr)\right)
    Z^{(\mathrm{TQFT})}_{M_4}[B]  
\end{equation}
in the infinite volume limit. 
Here, the exponential factor on the right-hand-side describes the gravitational local counterterm; $g_{M_4}$ is the metric of $M_4$, $R_{M_4}$ is its Ricci scalar, and $\Lambda$, $\kappa$, etc. are non-universal (or regularization-dependent) constants, which are commonly called cosmological constant, Newton constant, etc., respectively.

Thanks to the locality, the QFT partition functions on various closed manifolds are not totally independent but they are related by the ``cutting and gluing principle''~\cite{Witten:1988hf, Atiyah:1989vu, Baez:1995xq, Lurie:2009keu, Kapustin:2010ta, Freed:2016rqq}. 
Since our argument strongly relies on it, let us state it here clearly:
\begin{assumption}[Locality]
Consider the closed $4$-manifold $M_4$ that is composed by gluing two open $4$-manifolds $N^{(1)}_4$ and $N^{(2)}_4$ with the common $3$-boundary $Y_3=\partial N^{(1)}_4=\partial N^{(2)}_4$; 
\begin{align}
    M_4=\overline{N}^{(1)}_4\cup_{Y_3} N^{(2)}_4, 
\end{align} 
where $\overline{N}^{(1)}_4$ is the orientation reversal of $N^{(1)}_4$. 
Then, the QFT partition function on $M_4$ can be decomposed as 
\begin{equation}
    Z_{M_4}[B]=\bra{Z_{N^{(1)}_4}[B|_{N^{(1)}_4}]} \hat{U}(B_{Y_3\perp}) \ket{Z_{N^{(2)}_4}[B|_{N^{(2)}_4}]}. 
\end{equation}
Here, $\ket{Z_{N^{(1,2)}_4}[B]}$ is the state obtained by the path integral over $N^{(1,2)}_4$ submanifolds, which belongs to the (twisted) Hilbert space on $Y_3$ as 
\begin{align}
    \ket{Z_{N^{(1,2)}_4}[B]}\in \mathcal{H}(Y_3, B_{Y_3 ||}\text{-twisted}). 
\end{align}
The twisted boundary condition is specified by the parallel component, $B_{Y_3||}$, of the background gauge field $B$ along $Y_3$. 
When $B$ has the transverse component against $Y_3$, it introduces the unitary operator $\hat{U}(B_{Y_3\perp})$ for the $1$-form symmetry. \qed
\end{assumption}

This locality assumption is valid not only in the low-energy TQFT but also in the full ultraviolet description of the QFT with suitable regularization. 
For our discussion, we especially focus on the $4$-manifold with the form $M_4=Y_3\times S^1$, and then the TQFT partition function should behave as 
\begin{align}
    Z^{(\mathrm{TQFT})}_{Y_3\times S^1}[B]=\tr_{\mathcal{H}_0(Y_3,B_{Y_3||}\text{-twisted})}[\hat{U}(B_{Y_3\perp})], 
    \label{eq:TQFT_trace}
\end{align}
and the normalization of the TQFT partition function is strictly constrained by the topologicalness and also the cut-and-glue rule. 
Here, we put a subscript for the Hilbert space as $\mathcal{H}_0(Y_3,B_{Y_3||}\text{-twisted})$ to clarify that it is the Hilbert space of the low-energy TQFT, which only collects the zero-energy state (after subtracting gravitational counterterms) in the full QFT. 
For the low-energy TQFTs, we also put the following physical assumptions, called finiteness and remote detectability: 

\begin{assumption}[Finiteness of TQFT]
\label{assumption:finiteness}
    On any closed $3$-manifolds, the (twisted) Hilbert spaces $\mathcal{H}_0(Y_3,B\text{-twisted})$ of the TQFT are finite dimensional.  \qed
\end{assumption}
\noindent
The finiteness assumption implies the set of deconfined line operators are finitely generated, 
and it ensures the trace expression~\eqref{eq:TQFT_trace} well-defined. 

\begin{assumption}[Remote detectability of TQFT \cite{Lan:2018vjb, Lan:2018bui}]
    $(3+1)$d TQFTs have the line and surface operators. 
    If any non-identity lines have nontrivial commutation with some surface operators and vice versa, the TQFT is said to satisfy the remote detectability. \qed
\end{assumption}
\noindent
The (3+1)d TQFTs have both particle-like and string-like deconfined excitations, whose worldlines and worldsheets are described by the line and surface operators, respectively. 
The remote detectability, introduced in Refs.~\cite{Lan:2018vjb, Lan:2018bui}, claims those deconfined excitations should be detectable via the Aharonov--Bohm-type phase by performing the linking experiments using the complementary objects. 
We note that the locality, finiteness, and unitarity are standard assumptions of TQFTs including the mathematics literature, but the remote detectability is not always built in and we here put it as the extra physical requirement. 
For the (2+1)d TQFTs, the counterpart is known as the braiding nondegeneracy~\cite{Kitaev:2006lla}, and the remote detectability is its generalization for TQFTs in other dimensions. 
As shown physically in Refs.~\cite{Lan:2018vjb, Lan:2018bui} and also mathematically in Ref.~\cite{Johnson-Freyd:2020usu}, the $(3+1)$d TQFTs with remote detectability are constrained very strongly and have the rigid structure: 
\begin{itemize}
\item (3+1)d local, unitary TQFTs with remote detectability are described by discrete $2$-group gauge theories~\cite{Lan:2018vjb, Lan:2018bui,Johnson-Freyd:2020usu}. 
\end{itemize}
This ensures that the untwisted partition function on $Y_3\times S^1$ is a strictly positive integer,
\begin{align}
    Z^{(\mathrm{TQFT})}_{Y_3\times S^1}[0]=\tr_{\mathcal{H}_0(Y_3)}[\bm{1}]=\mathrm{dim}\,\mathcal{H}_0(Y_3) \in \mathbb{Z}_{>0}. 
\end{align}
The standard TQFT axiom assumes $\mathrm{dim}\, \mathcal{H}_0(S^3)=1$, but the Hilbert space can be $\{0\}$ for other closed $3$-manifolds in general TQFTs. 
When it is described by a (higher-)gauge theory, the trivial-connection state exists on any closed spatial manifolds, and thus the untwisted partition function on $Y_3\times S^1$ does not vanish. 
Thus, the division by zero does not appear when computing the ratios of the twisted and untwisted partition functions.

Let us consider the following ratios of twisted partition functions to obtain the low-energy TQFT information out of the full QFT partition functions by completely eliminating the gravitational counterterms: The first one is the Gu-Wen ratio~\cite{Gu:2009dr},
\begin{align}
    \frac{\left(Z_{Y_3\times S^1_{(L)}}[0]\right)^2}{Z_{Y_3\times S^1_{(2L)}}[0]}\xrightarrow{\mathrm{vol.}\to \infty} \mathrm{dim}\, \mathcal{H}_0(Y_3), 
    \label{eq:GuWen_ratio}
\end{align}
which counts the number of the TQFT zero-energy states on $Y_3$, and the second one extracts the TQFT twisted partition function~\cite{Maeda:2025ycr},
\begin{align}
    \frac{Z_{Y_3\times S^1}[B]}{Z_{Y_3\times S^1}[0]}\xrightarrow{\mathrm{vol.}\to\infty} \frac{1}{\dim \mathcal{H}_0(Y_3)} Z^{(\mathrm{TQFT})}_{Y_3\times S^1}[B]. 
    \label{eq:TwsitedZ_ratio}
\end{align}
For the infinite-volume limit, both sizes of $Y_3$ and $S^1$ should be sufficiently large compared with the mass gap.

\subsection{Gauge-invariant criteria for center-vortex/monopole condensation}
\label{sec:detect_condensation}

Let us propose the gauge-invariant criteria for the center-vortex condensation and also for the monopole condensation using the $\mathbb{Z}_N^{[1]}$-twisted partition functions. 
In this section, we describe the definitions and we shall give its justification in the next subsection~\ref{sec:justification}. 

\subsubsection{Weak and strong criteria of center-vortex condensation via \texorpdfstring{$Z_{T^4}[B]$}{Z{T4}[B]}}

We first define the center-vortex condensation by the infinite-volume behavior of the $\mathbb{Z}_N^{[1]}$-twisted partition function on the $4$-torus, $T^4$. We here give the two versions of the definition, which we call the weak and strong versions of the center-vortex condensation: 

\begin{definition}[Center-vortex condensation; weak version]
\label{def:center-vortex-condensation}
We consider the $4$-torus $T^4$ with introducing the minimal nontrivial background $2$-form gauge field $B$ along the $3$-$4$ cycle: 
\begin{align}
    T^4= S^1\times S^1\times \underbrace{S^1\times S^1}_{\int\frac{2\pi}{N}B=\frac{2\pi}{N}}, 
\end{align}
i.e. we may specifically take $B=\delta(T^2_{(12)})=\delta(x_3,x_4)\diff x_3\wedge \diff x_4$. 
If the following condition is satisfied,
\begin{align}
    \frac{Z_{T^4}[B]}{Z_{T^4}[0]} \xrightarrow{\mathrm{vol.\to \infty}} \text{const.} \neq 0,
    \label{eq:centervortex_condensationT4}
\end{align}
then we say that center-vortex condensation in the weak version (or, weak center-vortex condensation) occurs.
\qed
\end{definition}

\begin{definitionprime}
    [Center-vortex condensation; strong version]
\label{def:center-vortex-condensation-strong}
Let us consider the $4$-torus with the same background gauge field $\frac{2\pi}{N}B=\frac{2\pi}{N}\delta(x_3,x_4)\diff x_3\wedge \diff x_4$, and if the following condition is satisfied,
\begin{align}
    \frac{Z_{T^4}[B]}{Z_{T^4}[0]} \xrightarrow{\mathrm{vol.}\to \infty} 1,
    \label{eq:centervortex_condensationT4_strong}
\end{align}
then we say that center-vortex condensation in the strong version (or, strong center-vortex condensation) occurs.
\qed
\end{definitionprime}

\noindent
Let us interpret these definitions of the center-vortex condensation from the operator formalism. When we take the $1$st $S^1$-cycle as the imaginary-time direction, this ratio of the twisted partition function can be interpreted as 
\begin{align}
    \frac{Z_{T^4}[B]}{Z_{T^4}[0]}\xrightarrow{\mathrm{vol.}\to\infty} \frac{\tr_{\mathcal{H}_0(T^3,B\text{-twisted})}[\bm{1}]}{\dim \mathcal{H}_0(T^3)} =\frac{\dim \mathcal{H}_0(T^3, B\text{-twisted})}{\dim \mathcal{H}_0(T^3)},  
\end{align}
and the ratio converges to a non-negative rational number in the infinite-volume limit when the non-zero mass gap is assumed. 
The center-vortex condensation in the weak version requires the presence of the non-zero states in the $\mathbb{Z}_N^{[1]}$-twisted Hilbert space, and the strong version requires that the twisted Hilbert space should have the identical dimension with the untwisted one. 
Instead, if we regard the $4$th $S^1$-cycle as the imaginary-time direction, the operator-formalism interpretation of the ratio becomes 
\begin{align}
    \frac{Z_{T^4}[B]}{Z_{T^4}[0]}\xrightarrow{\mathrm{vol.}\to\infty} \frac{\tr_{\mathcal{H}_0(T^3)}[\hat{U}(B_{T^3\perp})]}{\dim \mathcal{H}_0(T^3)} , 
\end{align}
where the symmetry generator $\hat{U}(B_{T^3\perp})$ is defined over $T^2_{(12)}$. 
In particular, the strong version of the center-vortex condensation requires that the $\mathbb{Z}_N^{[1]}$-symmetry generator $\hat{U}(B_{T^3\perp})$ acts trivially on $\mathcal{H}_0(T^3)$, and thus 
\begin{align}
    &\text{Center-vortex condensation in the strong version} \notag\\
    &\Rightarrow 
    \text{All the $\mathbb{Z}_N^{[1]}$-charged lines show the area law (i.e. Unbroken $\mathbb{Z}_N^{[1]}$ symmetry).} \notag 
\end{align}
This can be shown by contradiction: If there exists a deconfined line having the nontrivial $\mathbb{Z}_N$ commutation relation with the symmetry generator, we consider the line operator along $S^1\subset T^3$ and act it on some of the ground states, which creates a $\mathbb{Z}_N$ charged state,\footnote{Here, we secretly use the remote detectability assumption to ensure this state is a nonzero vector at least for some states on $\mathcal{H}_0(T^3)$ if the line operator is nonzero.} but such states should be absent. 
We note that the converse does not necessarily hold: Let us consider the $\mathbb{Z}_N\times \mathbb{Z}_N$ gauge theory described by the $1$-form gauge fields $a_1,a_2$, and consider the magnetic coupling, $\frac{2\pi i}{N}\int a_1\cup a_2\cup B$. We can explicitly compute that $Z_{T^4}[B]/Z_{T^4}[0]=1/N^2$, while no $\mathbb{Z}_N^{[1]}$-charged lines are deconfined. 
This example satisfies the center-vortex condensation in the weak version while the strong version is violated. 

When assuming the gapped phase with the unbroken $\mathbb{Z}_N^{[1]}$ symmetry, we can show the weak version of the center-vortex condensation as follows. 
Let us prove the contraposition, ``If $\dim \mathcal{H}_0(T^3,\text{twisted})=0$, then the $\mathbb{Z}_N^{[1]}$-charged lines are deconfined''. 
Now, let us consider the 't~Hooft loop operator $H(C,\Sigma)$, which is defined by introducing $B=\delta(\Sigma)$ with the nontrivial boundary, $\partial \Sigma=C\not= \emptyset$. 
Using the cut-and-glue principle, we can think of the 't~Hooft operator as an inner product between the twisted and untwisted Hilbert spaces. 
Since $\mathcal{H}_0(T^3,\mathrm{twisted})=\{0\}$, such an expectation value always vanishes for any kinds of the dressing of the loop operators, i.e. all the dyonic operators with nonzero magnetic charges show the area law. 
Using the Wilson-'t~Hooft classification~\cite{tHooft:1979rtg, Nguyen:2023fun}\footnote{The Wilson-'t Hooft classification referred to here indicates the classification of possible long-distance behaviors of dyonic loops, which is valid. 
Conventionally, it is often supposed that a gapped phase is described by BF theory (including the trivial case) electrically coupled to the $\mathbb{Z}_N^{[1]}$ background $B$ with possible SPT stacking; however, we will see that there are exceptions to this classification due to symmetry fractionalization in Section \ref{sec:highercharge}. In this paper, we refer to the latter as the ``conventional Wilson-'t Hooft classification.''}, this is the signature of the deconfinement for the electric charged lines, which completes the proof. 
Thus, we find 
\begin{align}
    \text{Strong center-vortex condensation} 
    &\Rightarrow 
    \text{Unbroken $\mathbb{Z}_N^{[1]}$ symmetry} \notag\\
    &\Rightarrow
    \text{Weak center-vortex condensation}. 
\end{align}

We note that the strong center-vortex condensation implies ``confinement'' in the usual sense, but the weak center-vortex condensation allows the presence of the deconfined $\mathbb{Z}_N^{[1]}$-charged lines and does not necessarily imply ``confinement'': 
For example, as will be noted in the discussion (Section \ref{sec:discussion}), the weak center-vortex condensation criterion is satisfied by the non-abelian discrete gauge theory with the Heisenberg group $H_N$. 
Such exceptions may be ruled out by requiring the center-vortex condensation criterion on $S^2\times S^2$, which turns out to be completely equivalent to the unbroken $\mathbb{Z}_N^{[1]}$ symmetry with nonzero mass gap.

\subsubsection{Weak and strong criteria of monopole condensation via \texorpdfstring{$Z_{L(qN;1)\times S^1}[B]$}{Z{L(qN;1)xS1}[B]}}

Next, we define the monopole condensation using the $\mathbb{Z}_N^{[1]}$-twisted partition function on the lens space, $L(qN;1)\times S^1$ (For details of the lens space, see Appendix~\ref{app:Lens_space}): 

\begin{definition}[{charge-$[q]_{Nq\mathbb{Z}}$ monopole condensation; weak version}]
\label{def:lens-space-criterion}
Let us take a positive integer $q\in \mathbb{Z}_{>0}$. 
We consider the product space of the lens space $L(qN;1)\simeq S^3/\mathbb{Z}_{qN}$ and $S^1$, i.e. $M_4=L(qN;1)\times S^1$. 
As $\pi_1(L(qN;1))=H_1(L(qN;1))=\mathbb{Z}_{qN}$, let us call its generating $1$-cycle as $\gamma\subset L(qN;1)$. 
The $\mathbb{Z}_N^{[1]}$ background gauge field $B$ takes values in $H^2(L(qN;1)\times S^1;\mathbb{Z}_N)=\mathbb{Z}_N^{\oplus 2}$, and we choose the one as 
\begin{align}
    \int_{\gamma\times S^1} \frac{2\pi}{N}B=\frac{2\pi}{N},  
\end{align}
and we can write it as $B = A_{\mathbb{Z}_N} \cup \delta(x_4)\diff x_4$ using the generator $A_{\mathbb{Z}_N}$ of $H^1(L(qN;1), \mathbb{Z}_N) \simeq \mathbb{Z}_N$. 
Its Bockstein homomorphism $\beta_{\mathbb{Z}}$ associated with $0\to \mathbb{Z}\to \mathbb{Z}\to \mathbb{Z}_{N}\to 0$ becomes nontrivial, 
\begin{align}
    \beta_{\mathbb{Z}}(B)=q\, \delta(\gamma)\in \mathrm{Tor}\, H^3(L(qN;1)\times S^1)\simeq \mathbb{Z}_{qN}.
\end{align} 
If the following condition is satisfied with the above choice of $B$, 
\begin{equation}
    \frac{Z_{L(qN;1) \times S^1}[B]}{Z_{L(qN;1) \times S^1}[0]} \xrightarrow{\text{vol.} \to \infty} \text{const.} \neq 0, 
    \label{eq:Lens-criterion}
\end{equation}
then we say that the charge-$[q]_{Nq\mathbb{Z}}$ monopole condensation in the weak version occurs. 
When there exists, at least one, $q>0$ that satisfies the criterion, we simply say the monopole condensation in the weak version (or, weak monopole condensation) occurs. 
\qed
\end{definition}

\begin{definitionprime}[{charge-$[q]_{Nq\mathbb{Z}}$ monopole condensation; strong version}]
\label{def:lens-space-criterion-strong}
    We consider the same setup as above: We take $M_4=L(qN;1)\times S^1$ with introducing the $\mathbb{Z}_N^{[1]}$ background gauge field $B=A_{\mathbb{Z}}\cup \delta(x_4)\diff x_4$. If the following condition is satisfied, 
    \begin{align}
        \frac{Z_{L(qN;1) \times S^1}[B]}{Z_{L(qN;1) \times S^1}[0]} \xrightarrow{\text{vol.} \to \infty} 1, 
    \label{eq:Lens-criterion-strong}
    \end{align}
    then we say that the charge-$[q]_{Nq\mathbb{Z}}$ monopole condensation in the strong version occurs. 
When there exists, at least one, $q>0$ that satisfies the criterion, we simply say the monopole condensation in the strong version (or, strong monopole condensation) occurs. 
\end{definitionprime}

\noindent
In the operator-formalism interpretation by taking $S^1$ as the imaginary time, the ratio in~\eqref{eq:Lens-criterion} becomes 
\begin{align}
    \frac{Z_{L(qN;1) \times S^1}[B]}{Z_{L(qN;1) \times S^1}[0]} \xrightarrow{\text{vol.} \to \infty} \frac{\tr_{\mathcal{H}_0(L(qN;1))}[\hat{U}(B_{L(qN;1)\perp})]}{\dim \mathcal{H}_0(L(qN;1))}, 
    \label{eq:lens_partition_function_trace}
\end{align}
where the symmetry generator $\hat{U}(B_{L(qN;1)\perp})$ is defined along the $2$-cycle generator for the $2$-nd $\mathbb{Z}_N$-valued homology $H_2(L(qN;1);\mathbb{Z}_N)=\mathrm{Tor}(H_1(L(qN;1)),\mathbb{Z}_N)=\mathbb{Z}_N$. 

As another remark, let us explain about our terminology, ``charge-$[q]_{Nq\mathbb{Z}}$ monopole condensation''. 
As we will see in the following section~\ref{sec:justification}, if the magnetic charge lattice of the condensed particles can screen at least one of the magnetic charges belonging $q+Nq\mathbb{Z}$, the criterion (Def.~\ref{def:lens-space-criterion} or~\ref{def:lens-space-criterion-strong}) is going to be satisfied. 
Thus, we cannot uniquely specify which monopole condensation causes the confinement, and the condensed magnetic charge could be a large one of the form $q+Nqk$ with some $k\gg 1$.  
We also note that the criterion does not distinguish if the condensed particles are monopoles or dyons as long as they create the corresponding magnetic flux.

By repeating the similar discussion for the case of the center-vortex condensation, we can show 
\begin{align}
    \text{Strong monopole condensation} 
    &\Rightarrow 
    \text{Unbroken $\mathbb{Z}_N^{[1]}$ symmetry} \notag\\
    &\Rightarrow
    \text{Weak monopole condensation}. 
\end{align}
Thus, we are interested in the relative strength between the strong center-vortex and monopole condensations and the one between the weak center-vortex and monopole condensations, which shall be discussed in Sec.~\ref{sec:main_result}.

\subsection{Justifications for the proposals: Consideration from the Higgs vacuum}
\label{sec:justification}

For $SU(N)$ gauge theories, it is not generically possible (or, at least, unknown how) to give a gauge-invariant definition of the dynamical center vortex and the dynamical monopoles, which typically assume the Abelianization of the gauge group. 
Despite this difficulty, we have proposed the gauge-invariant definition for the center-vortex condensation (Def.~\ref{def:center-vortex-condensation}) and the monopole condensation (Def.~\ref{def:lens-space-criterion}).
In this section, we shall explain why they are physically reasonable.

The key idea behind these definitions is that the free energy of some soliton (effective tension or effective mass) can be measured by a proper symmetry-twisted partition function.
From this perspective, we expect the following in the present case.
The center-vortex tension can be naturally extracted via the twisted partition function:
\begin{align}
    \frac{Z_{T^4}[B]}{Z_{T^4}[0]} &\sim \exp \big(-L_1 L_2 \times (\text{effective center-vortex tension})\big).
    \label{eq:vortex_motivation}
\end{align}
On the other hand, the lens-space setup is engineered to enforce the appearance of the monopole if the Abelianization occurs:
\begin{align}
    \frac{Z_{L(qN;1)\times S^1}[B]}{Z_{L(qN;1)\times S^1}[0]} &\sim \exp\left(-\min_{C\sim \gamma} \mathrm{Length}(C) \times (\text{effective monopole mass})\right),
    \label{eq:monopole_motivation}
\end{align}
where $\min_{C\sim \gamma} \mathrm{Length}(C)$ stands for the minimal length of cycles that belong to the homology class $[\gamma]$. 
If these two expectations are justified, the weak versions of our proposals are physically well motivated.

In addition, if the above expectations hold, the symmetry defects in our setups could be interpreted as center-vortex and monopole ``creation operators.''
The strong versions of our proposals indicate that all the ground states have to be invariant under such operations.
Hence, these are also reasonable definitions for condensations.

In the rest of this section, we illustrate that the above expectations on the twisted partition functions are valid.
To this end, we compute these symmetry-twisted partition functions for concrete gauge-theory models where the gauge symmetry is Higgsed, and we shall see that the leading nontrivial behaviors are indeed controlled by the center-vortex tension and the monopole mass, respectively. 

\subsubsection{Charge-\texorpdfstring{$N$}{N} Abelian-Higgs model}

Let us start with the easiest example, the charge-$N$ Abelian-Higgs model, which contain the $U(1)$ gauge field $a$ coupled with the charge-$N$ scalar field $\phi$: 
\begin{equation}
    S[a, \phi, B]=\int\left[\frac{1}{2g^2}|\diff a+\frac{2\pi}{N}B|^2+|(\diff+\im N a)\phi |^2 +V(|\phi|^2)\right]. 
\end{equation}
We take the wine-bottle potential, $V(|\phi|^2)=\lambda (|\phi|^2-v^2)^2$, with $v>0$, and the classical moduli is parametrized as $\phi(x)=v\, \rme^{\im \varphi(x)}$ with the $2\pi$-periodic scalar field $\varphi(x)$. 
Minimization of the scalar kinetic term is achieved by $Na+\diff \varphi=0$, and thus the Wilson loop obeys the perimeter law. This system is gapped and the $\mathbb{Z}_N^{[1]}$ symmetry is spontaneously broken. 

\subsubsection*{Torus twisted partition function}

This system has a vortex excitation. The vortex core is the codimension-$2$ surface specified by $\phi=0$. When sufficiently separated from the core, we can use the classical moduli condition, $\phi=v\, \rme^{\im \varphi}$ and $Na+\diff \varphi=0$, and we require that $\varphi$ has a winding number around the vortex core: $\frac{1}{2\pi}\int_C \diff \varphi=\pm 1$ for a sufficiently large loop $C$ that links with the vortex core with the winding number $1$. If we take a two-dimensional disc $D$ with $\partial D=C$, we find $\frac{1}{2\pi}\int_{D}\diff a=-\frac{1}{2\pi N}\int_C \diff \varphi=\mp \frac{1}{N}$, and thus the vortex core carries the fractional magnetic flux, which is the characterization of the center vortex.

Let us consider the torus partition function $Z_{T^4}[B]$ with the 't~Hooft flux $\int_{34}\frac{2\pi}{N}B=\frac{2\pi}{N}$, which requires the fractional magnetic flux along the $3$-$4$ direction. 
As the magnetic flux along the $3$-$4$ direction has to be $\frac{2\pi}{N}$ mod $2\pi$, this enforces the imbalance of the center vortex and anti-vortex excitations, and in particular, at least one dynamical center vortex is required to be present. This is identical with the domain-wall excitation for the Ising model with the symmetry-twisted boundary condition, and one can see more explicit computation in  Ref.~\cite[Appendix~B]{Tanizaki:2022ngt}.
The torus partition function then behaves as
\begin{align}
    \frac{Z_{T^4}[B]}{Z_{T^4}[0]}\sim \exp(-T_{\mathrm{vortex}} L_1L_2), 
\end{align}
where $T_{\mathrm{vortex}}$ is the vortex tension, and this is nothing but \eqref{eq:vortex_motivation}. 
Thus, our Defs.~\ref{def:center-vortex-condensation} or~\ref{def:center-vortex-condensation-strong}  suggest that the effective vortex tension becomes $0$, and thus the vacuum would be filled with the center-vortex configurations as expected from the analogue of Peiers's argument.

\subsubsection*{Lens-space twisted partition function}

Let us now consider the partition function on the lens space $L(q N;1) \times S^1$ with the background field $B$ specified by $\int_{\gamma\times S^1}\frac{2\pi}{N}B=\frac{2\pi}{N}$. This satisfies $\beta_{\mathbb{Z}}B=q\delta(\gamma)$ as discussed in \eqref{eq:derivaton_BocksteinHom} of Appendix~\ref{app:Lens_space}. 
We shall show that, for all $q\ge 1$,
\begin{align}
    \frac{Z_{L(qN;1)\times S^1}[B]}{Z_{L(qN;1)\times S^1}[0]}=0. 
\end{align}
That is, the lens-space twisted partition function vanishes exactly at finite volumes. 
This vanishing is a result of the absence of dynamical monopoles.
To illustrate this, let us provide two derivations: one concrete and one abstract. 

We start with the concrete derivation. For this purpose, it is convenient to consider the lattice regularization of the $U(1)$ gauge theory using the modified Villain formulation, which introduces 
\begin{itemize}
    \item $a_\ell$: $\mathbb{R}$-valued link variable, 
    \item $n_p$: $\mathbb{Z}$-valued plaquette variable, and 
    \item $\tilde{a}_{c}$: $\mathbb{R}$-valued cube (or dual link) variable. 
\end{itemize}
The $\mathbb{Z}_N$-valued $2$-form gauge field $B$ is realized as the $\mathbb{Z}$-valued plaquette variable $\hat{B}_p$ that satisfies $(\diff \hat{B})_c\in N\mathbb{Z}$. 
For the derivation, the scalar sector is not important, so let us here only write the lattice Maxwell action in this formulation,
\begin{equation}
    S^{(\mathrm{lat})}_{\mathrm{Max}}[a,n,\tilde{a},B]=\sum_p \frac{1}{2g^2}\Big|(\diff a)_p+2\pi \left(n_p+\frac{1}{N}\hat{B}_p\right)\Big|^2 + \im \sum_c \tilde{a}_c  \left((\diff n)_c+\frac{1}{N}(\diff \hat{B})_c\right). 
\end{equation}
We note that the ambiguity for taking the lift to the $\mathbb{Z}$-valued cochain $\hat{B}_p$ can be absorbed by the redefinition of the $\mathbb{Z}$-valued plaquette variable $n$, and thus the result does not depend on the choice of the lift. 
Since $\frac{1}{N}\diff \hat{B}=\beta_{\mathbb{Z}}B=q\delta(\gamma)$, the coupling $\tilde{a}\cup \beta(B)$ produces the 't~Hooft-like operator $\im q\int_{\gamma}\tilde{a}$,\footnote{Strictly speaking, however, we need to distinguish our computation from the computation of the one-point function of the 't~Hooft loop operator along $\gamma$ on $L(qN;1)\times S^1$. Importantly, we only introduce the symmetry operator for the $2$-cycle of $L(qN;1)$ in $H_2(L(qN;1);\mathbb{Z}_N)$,  and thus the dependence on $\gamma$ is topological and no renormalization issue appears, while the 't~Hooft loop operator geometrically depends on $\gamma$ and it is subject to the ultraviolet renormalization. } 
and the equation of motion of $\tilde{a}$ gives the delta-functional constraint of $\diff n+q\delta(\gamma)=0$. However, $q\gamma\subset L(qN;1)$ is not a boundary of any $2$-surfaces, so such configuration for $n$ does not exist and the twisted partition function vanishes identically.

Let us give a more formal derivation. Our model has the magnetic $U(1)^{[1]}$ symmetry in addition to the electric $\mathbb{Z}_N^{[1]}$ symmetry, and let us introduce the $U(1)$ $2$-form background gauge field $B_M$, which adds the topological coupling, 
\begin{equation}
    \im \int B_M \wedge \left(\frac{1}{2\pi}\diff a + n+\frac{1}{N}B\right). 
\end{equation}
This coupling is manifestly invariant under the electric $1$-form gauge transformation, $B\mapsto B+\diff \lambda_E^{(1)}$. 
Under the magnetic $1$-form gauge transformation, $B_M\mapsto B_M+\diff \lambda_M^{(1)}$, however, the partition function shows the 't~Hooft anomaly, 
\begin{align}
    Z[B_M+\diff \lambda_M^{(1)},B_E]=Z[B_M,B_E]\exp\left(\im \int_{\mathrm{PD}[\beta_{\mathbb{Z}}B]} \lambda^{(1)}_M \right),  
\end{align}
where $\mathrm{PD}[\beta_{\mathbb{Z}}B]$ is the Poincar\'e-dual $1$-cycle of $\beta_{\mathbb{Z}}B$. 
Especially by setting $B_M=0$ and $\diff \lambda_M^{(1)}=0$ for the lens space, $Z[B]=Z[0,B]$ satisfies
\begin{align}
    Z_{L(qN;1)\times S^1}[B]=Z_{L(qN;1)\times S^1}[B]\exp\left(\frac{2\pi\im}{N}\int_{\gamma} \lambda^{(1)}_M \right),  
\end{align}
and we reproduce $Z_{L(qN;1)\times S^1}[B]=0$. 

\subsubsection{\texorpdfstring{$SU(N)$}{SU(N)} gauge-adjoint Higgs model}

For the charge-$N$ Abelian-Higgs model, we have seen that the lens-space twisted partition function vanishes identically at any finite volumes, and the formal argument clarifies that it comes out of the presence of the magnetic $1$-form symmetry. 
Let us then consider the UV model that does not have the magnetic $1$-form symmetry and discuss the behavior of the lens-space twisted partition function. 

Here, we focus on the lens-space twisted partition function for the $SU(N)$ gauge theory with one adjoint scalar field $\Phi$, and we consider the case its vacuum expectation value is given by $\langle \Phi\rangle=\mathrm{diag}(v_1,\ldots, v_N)$ with $v_1>v_2>\cdots>v_N$, which causes the adjoint Higgsing,  $SU(N)\xrightarrow{\mathrm{Higgs}}U(1)^{N-1}$. 
The low-energy field theory belongs to the Coulomb phase, where the monopoles are well-defined thanks to the Abelianization at low energies.\footnote{Here, we discuss the case where the Abelianization of the gauge group is caused by the adjoint Higgs field, but this is not the essential assumption. The following discussion holds so far as the low-energy gauge group becomes $U(1)^{N-1}$ by other mechanisms, such as by choosing the maximal Abelian gauge.}  
One may add another adjoint Higgs field that further causes $U(1)^{N-1}\xrightarrow{\mathrm{Higgs}}\mathbb{Z}_N$, and then the low-energy field theory becomes identical with the one for the charge-$N$ Abelian-Higgs model studied above. 
Our purpose here is just to identify the leading non-vanishing contribution for $Z_{L(qN;1)\times S^1}[B]$, so let us keep our setup as simple as possible. 

The low-energy massless degrees of freedom are given by the $U(1)^{N-1}$ gauge field. 
Let us again use the modified Villain expression, and we use the $\mathbb{R}^{N-1}$-valued $1$-form fields, 
$\vec{a}=\sum_i a_i \vec{\alpha}_i$ and $\vec{\tilde{a}}=\sum_i \tilde{a}_i \vec{\mu}_i$, and the $\Lambda_{\mathrm{root}}$-valued $2$-form field, $\vec{n}=\sum_i n_i \vec{\alpha}_i$.\footnote{For completeness, we here summarize the convention of the $SU(N)$ root and weight vectors using the canonical orthonormal basis of $\mathbb{R}^N$: $\{\vec{e}_n\}_{n=1,\ldots, N}$.
The positive simple roots are given by $\vec{\alpha}_n = \vec{e}_n-\vec{e}_{n+1}~(n=1,\cdots,N-1)$, and they span the root lattice $\Lambda_{\mathrm{roots}} := \{ \vec{\alpha} \in \mathbb{Z}^N~|~\sum_{i=1}^N \alpha_i =0 \}$.
The weight lattice $\Lambda_{\mathrm{weights}}$ is spanned by the fundamental weights $\vec{\mu}_n$ ($n=1,\ldots, N-1$), which are given by $\vec{\mu}_n=\vec{e}_1+\cdots+\vec{e}_n-\frac{n}{N}\sum_{k=1}^{N}\vec{e}_k$.
} 
Then, the coupling to the background $\mathbb{Z}_N$ $2$-form gauge field $B$ is given by 
\begin{align}
    S_{\mathrm{eff}}=\frac{1}{2g^2}\int |\diff \vec{a}+2\pi \vec{n}+2\pi \vec{\mu}_1 \hat{B}|^2+\frac{\im}{2\pi}\int \diff \vec{\tilde{a}}\wedge (\diff \vec{a}+2\pi\vec{n}+2\pi\vec{\mu}_1 \hat{B}),
\end{align}
where we take a lift of $B$ to a $\mathbb{Z}$-valued $2$-cochain $\hat{B}$. 
To be more precise, this lift is associated with the short exact sequence,
\begin{equation}
    0 \to \Lambda_{\mathrm{roots}} \to \Lambda_{\mathrm{weights}} \to \mathbb{Z}_N \to 0,
\end{equation}
and the $\mathbb{Z}_N^{[1]}$-symmetry background $B \in H^2(M_4; \mathbb{Z}_N)$ can be represented by a 2-cochain $\vec{\mu}_1\hat{B} \in C^2(M_4; \Lambda_{\mathrm{weights}})$ in terms of the $U(1)^{N-1}$ magnetic flux. 
There exists ambiguity for the choice of the lift, but it can be compensated by the redefinition of $\vec{n}\in C^2(M_4;\Lambda_{\mathrm{root}})$. 
% Taking its Abelian duality, the theory can be written in terms of the dual $U(1)^{N-1}$ gauge field, $\vec{\tilde{a}}=\sum_i \tilde{a}_i \mu_i$, as 
% \begin{align}
%     S_{\mathrm{eff}}=\frac{1}{2\tilde{g}^2}\int |\diff \tilde{a}|^2+ \frac{\im }{2\pi}\int \diff \vec{\tilde{a}} \cdot 2\pi (\mu_1+\alpha) B. 
% \end{align}
On $L(q N;1)\times S^1$ with $\beta_{\mathbb{Z}}B=\frac{1}{N}\diff \hat{B}=q\delta(\gamma)$, the topological coupling can be calculated as follows:
\begin{align}
    \frac{\im }{2\pi}\int \diff \vec{\tilde{a}} \cdot 2\pi \vec{\mu}_1 \hat{B}
    &= \im \int \vec{\tilde{a}}\cdot N\vec{\mu}_1\wedge \frac{1}{N} \diff \hat{B} 
    = \im q \int_\gamma \vec{\tilde{a}}\cdot N\vec{\mu}_1. 
\end{align}
As we have seen in the charge-$N$ Abelian-Higgs model, the partition function vanishes within this effective theory as $q\gamma$ is not a boundary of any $2$-surfaces. 

In the full gauge-adjoint Higgs theory, however, there exist dynamical 't~Hooft-Polyakov monopoles with the magnetic charges $\alpha_i$ ($i=1,\ldots, N-1$)~\cite{tHooft:1974kcl, Polyakov:1974ek}. 
When we consider the excitation of the $\vec{\alpha}_i$-monopole along the worldline $C_i$, it contributes to the action as $M_{\alpha_i}\mathrm{Length}(C_i)+\im \int_{C_i}\vec{\tilde{a}}\cdot \vec{\alpha}_i$, where $M_{\alpha_i}$ is the $\vec{\alpha}_i$-monopole mass. 
The equation of motion in terms of $\vec{\tilde{a}}_i$ including these monopole worldlines gives the delta-function constraint of 
\begin{align}
    \diff \vec{n}+q N\vec{\mu}_1 \delta(\gamma)+\sum_i \vec{\alpha}_i \delta(C_i)=0, 
\end{align}
which can be solved for $\vec{n}$ iff $[q N\vec{\mu}_1 \gamma +\sum_i \vec{\alpha}_i C_i]=0\in H_1(L(Nq;1)\times S^1;\Lambda_{\mathrm{root}})$. 
An obvious sufficient condition is to set $C_i=-(N-i)q\gamma$ since $N\vec{\mu_1}=\sum_i (N-i)\vec{\alpha}_i$ so that the cancellation occurs on the nose, but one can freely add/subtract $Nq \gamma$ for each $C_i$ since $qN \vec{\alpha_i} \gamma$ is a boundary of $2$-surface for the corresponding homology. 
We can also change the shape of each $C_i$ continuously as it does not change the homology element.
Thus, the dominant contribution comes from the monopole sector whose magnetic charge is $Nq\vec{\mu}_1$ up to $Nq\Lambda_{\mathrm{root}}$ with the shortest worldline $C$, which is homologically same with $\gamma$.  

Let us denote $M_{\mathrm{mon}.}(\vec{\alpha})$ for the lowest mass gap for the monopole charge $\vec{\alpha}\in \Lambda_{\mathrm{root}}$, 
and then the leading nontrivial behavior of the lens-space twisted partition function can be obtained as
\begin{align}
    \frac{Z_{L(qN;1)\times S^1}[B]}{Z_{L(qN;1)\times S^1}[0]}\sim \exp\left(-\min_{\vec{\alpha}'\in \Lambda_{\mathrm{root}}}M_{\mathrm{mon}.}(qN\vec{\mu}_1+qN\vec{\alpha}')\times \min_{C\sim \gamma} \mathrm{Length}(C)\right). 
\end{align}
This is nothing but \eqref{eq:monopole_motivation}. 
As a result, our criterion of the charge-$q$ monopole condensation, Def.~\ref{def:lens-space-criterion}, requires that the monopole mass becomes zero for the magnetic charge $qN\vec{\mu}_1$ modulo $qN\Lambda_{\mathrm{root}}$. 
The above observation is summarized as follows:
\begin{itemize}
    \item We consider an abelianization scenario, where the $SU(N)$ gauge group is reduced into $U(1)^{N-1}$.
    \item The $\mathbb{Z}_N$ one-form symmetry background $B$ is interpreted as a background magnetic flux $\vec{\mu}_1\hat{B}$ in the weight lattice $\Lambda_{\mathrm{weights}}$.
    \item When the Bockstein homomorphism $\beta(B)$ is non-trivial, the Bianchi identity for the magnetic flux must be violated: $\diff \hat{B} \neq 0$. With such a background, the existence of magnetic monopoles is forced when the abelianization occurs.
\end{itemize}
Hence, the non-suppression of the partition function under the twisted boundary condition, which requires the existence of a magnetic flux source, would be suitable as a criterion for monopole condensation.
When $H^3(M_4; \mathbb{Z})$ has no torsion, the Bockstein homomorphism is trivial, and we cannot force the presence of monopoles, which is why we choose $M_4 = L(qN;1) \times S^1$ in Def.~\ref{def:lens-space-criterion}, where the Bockstein homomorphism becomes nontrivial.
% See Appendix \ref{app:Lens_space} for details of the lens space.

By the above discussion, the requirement of Def.~\ref{def:lens-space-criterion} is that the lattice of condensing magnetic charges should have an overlap with $Nq\vec{\mu}_1+Nq \Lambda_{\mathrm{root}}$. 
Thus, even if the fundamental condensing monopoles have smaller magnetic charges, say $\vec{\alpha}_i$'s, it satisfies the charge-$q$ monopole condensation criterion. 
Moreover, the charge-$q$ monopole condensation criterion can be satisfied even if the fundamental condensing magnetic charge is $Nq(\vec{\mu}_1 + k_i \vec{\alpha}_i)$ with some $k_i\gg 1$. 
This subtlety is the motivation behind our terminology of charge-$[q]_{Nq\mathbb{Z}}$ monopole condensation. 
We will revisit the nontrivial condensation pattern in Section \ref{sec:highercharge}.

\subsection{Relation between center-vortex and monopole criteria for gapped phases}
\label{sec:main_result}

In this section, we reveal the relationship between the center-vortex and monopole condensation in the weak versions (Claim \ref{claim:weak-vortex-monopole-criterion}) and also in the strong versions (Claim \ref{claim:strong-vortex-monopole-criterion}).

\subsubsection{Equivalence for the weak center-vortex/monopole condensations}

In this section, we show the equivalence of the weak versions of the center-vortex condensation criterion (Def.~\ref{def:center-vortex-condensation}) and the monopole condensation criterion (Def.~\ref{def:lens-space-criterion}) in gapped phases.
We note that the remote detectability for the low-energy TQFT does not play any essential role for the proof of this statement.

\begin{claim}
\label{claim:weak-vortex-monopole-criterion}
For the $(3+1)$d local, unitary, $\mathbb{Z}_N^{[1]}$-symmetric TQFTs with remote detectability, 
\begin{align}
    &~~~~~ Z^{(\mathrm{TQFT})}_{T^4}[B=\delta(x_3,x_4)\diff x_3\wedge \diff x_4]\not=0 \notag\\
    &\Leftrightarrow \exists q>0 ~~\text{s.t.}~~ Z^{(\mathrm{TQFT})}_{L(qN;1)\times S^1}[B=A_{\mathbb{Z}}\wedge \delta(x_4)\diff x_4]\not = 0. 
\end{align}
This establishes the equivalence between the center-vortex/monopole condensation criteria under the presence of the mass gap: 
\begin{align}
   &~~~~~ \text{weak center-vortex condensation (Def.~\ref{def:center-vortex-condensation})} \notag \\
& \Leftrightarrow  
    \text{weak charge-$[q]_{Nq\mathbb{Z}}$ monopole condensation for some $q>0$ (Def.~\ref{def:lens-space-criterion})}  .
\end{align}
\end{claim}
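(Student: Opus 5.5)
The plan is to reduce the statement to a question about a $(2+1)$d theory with a $\mathbb{Z}_N$ $0$-form symmetry. Both backgrounds in the claim have the form $B=A\cup\delta(x_4)\diff x_4$ on $Y_3\times S^1$, with $A\in H^1(Y_3;\mathbb{Z}_N)$. For the torus we take $Y_3=T^3$ with $A$ dual to the $x_3$-cycle; for the lens space we take $Y_3=L(qN;1)$ with $A=A_{\mathbb{Z}_N}$ the generator. I would therefore compactify the $(3+1)$d TQFT on the $S^1_{(4)}$ factor. The resulting $(2+1)$d TQFT $\mathcal{T}_3$ carries a $\mathbb{Z}_N$ $0$-form symmetry, generated by the $\mathbb{Z}_N^{[1]}$ surface operator wrapped on $S^1_{(4)}$, and
\begin{align}
Z^{(\mathrm{TQFT})}_{Y_3\times S^1}[A\cup\delta(x_4)\diff x_4]=Z^{\mathcal{T}_3}_{Y_3}[A].
\end{align}
Since the $4$d theory is a discrete $2$-group gauge theory, $\mathcal{T}_3$ is in general not simple. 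It decomposes into a finite direct sum of ``universes'' labeled by the holonomy sectors around $S^1_{(4)}$, each of which is a unitary $(2+1)$d TQFT. The generator $g$ of $\mathbb{Z}_N$ acts by permuting these universes and, on those it fixes, by a symmetry of the corresponding modular tensor category.

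The first step is the direction ``$\Rightarrow$''. We evaluate $Z^{\mathcal{T}_3}_{T^3}[A]$ with $x_1$ as time, which gives $\sum_{u}\dim\mathcal{H}_u(T^2,g\text{-twisted})$, where $\mathcal{H}_u$ denotes the Hilbert space in universe $u$. Only universes with $g\cdot u=u$ can contribute, because a $g$-twist along $x_3$ glues universe $u$ to universe $g\cdot u$. For a $g$-fixed universe, the $g$-defect Hilbert space on $T^2$ has dimension equal to the number of $g$-invariant anyons, by the standard $G$-crossed counting. This number is at least $1$, since the vacuum anyon is always invariant. Hence
\begin{align}
Z_{T^4}[B]\neq0\iff\text{$g$ fixes at least one universe of $\mathcal{T}_3$.}
\end{align}
The same locality argument applies to $L(qN;1)$. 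There $A_{\mathbb{Z}_N}$ has holonomy $g$ around $\gamma$, so only $g$-fixed universes contribute. Therefore a nonzero lens-space partition function for any $q$ forces the existence of a $g$-fixed universe, and hence $Z_{T^4}[B]\neq0$. This is the easier implication.

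The second step is the converse. Assume a $g$-fixed universe $u$ exists; we must find $q>0$ with $Z^{\mathcal{T}_3}_{L(qN;1)}[A_{\mathbb{Z}_N}]\neq0$. Restricted to $u$, the $\mathbb{Z}_N$ symmetry is described by its action on anyons, a fractionalization class, and possibly an anomaly in $H^4(B\mathbb{Z}_N;U(1))\simeq\mathbb{Z}_N$. On $L(qN;1)$ the relevant characteristic classes are pulled back along the classifying map of $A_{\mathbb{Z}_N}$, and $\beta_{\mathbb{Z}}A_{\mathbb{Z}_N}$ equals $q$ times the generator of the torsion subgroup. My choice would be $q$ a multiple of $N$ and of the orders of the fractionalization and anomaly data. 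With that choice these pulled-back obstructions trivialize, and the background looks like a flat $\mathbb{Z}_N$ bundle whose Bockstein is trivial mod the relevant orders. I would then compute $Z_u(L(qN;1);A)$ by surgery. We write $L(qN;1)$ as two solid tori glued by the $S$- and $T^{qN}$-type mapping class, and insert the $g$-defect along the core, which gives a $G$-crossed expression of the form
\begin{align}
Z_u(L(qN;1);A)=\sum_{a\in\mathcal{C}_g}S_{0a}\,\theta_a^{\,qN}\,(\text{phase}).
\end{align}
The aim is to show that for suitably divisible $q$ the twists $\theta_a^{qN}$ and the phases all become $1$. The sum then collapses to $\sum_a S_{0a}$, which is strictly positive because $S_{0a}=d_a/\mathcal{D}>0$.

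The main obstacle is this last step: controlling the phases of the $g$-twisted sector in the presence of symmetry fractionalization or an anomaly. This is exactly the subtlety that makes the condensed charge $q$ non-minimal, and I would handle it by using the finiteness assumption (Assumption~\ref{assumption:finiteness}) to guarantee that all relevant orders are finite. If the $G$-crossed formula is unwieldy, I would try a fallback. Because $q\mathbb{Z}$ is closed under multiplication, one can pick $q$ so that the lens-space background is pulled back from $L(N;1)$ along a degree-one map on the $\mathbb{Z}_N$ data while having trivial higher obstructions. One then argues positivity through reflection positivity of the gluing $\bra{\psi}\hat U\ket{\psi}$ with $\hat U$ of finite order, averaging over powers.
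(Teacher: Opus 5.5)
There is a genuine gap in your second step, the implication $Z_{T^4}[B]\neq0\Rightarrow\exists\,q$ with $Z_{L(qN;1)\times S^1}[B]\neq0$. This is the claim's ``$\Rightarrow$''; your first step actually proves the reverse implication, which is fine if heavier than needed.

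To make the surgery sum collapse, you need every eigenvalue $\theta_a^{qN}$, $a\in\mathcal{C}_g$, and the unspecified ``(phase)'' to equal $1$ for a single $q$. You propose to get this from the finiteness assumption (Assumption~\ref{assumption:finiteness}). But finite-dimensionality of $\mathcal{H}_0$ says nothing about the order of the eigenvalues of a unitary acting on it. That the $G$-crossed twists are roots of unity is a separate Vafa-type input, or must be extracted from the explicit finite $2$-group gauge theory description. The finiteness of the fractionalization and anomaly groups does not supply it.

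Even granting it, there are further problems:
\begin{itemize}
\item You evaluate a single $g$-fixed universe $u$, but the lens-space partition function sums over all $g$-fixed universes. You would also need a common $q$ and control of relative phases to exclude cancellations.
\item The surgery amplitude has weights $|S_{0a}|^2$, collapsing to $\sum_a|S_{0a}|^2=1$, not $\sum_a S_{0a}$.
\item Your fallback comes closest, but ``averaging over powers'' of $\hat U$ yields $\|P_{\hat U=1}\psi\|^2$, the norm of the component of $\psi$ in the $\hat U$-invariant subspace. That can vanish, and the finite order of $\hat U$ is again assumed rather than derived.
\item The $3$d $\mathbb{Z}_N$ $0$-form symmetry is generated by the surface operator at fixed $x_4$, Poincar\'e dual to $A\cup\delta(x_4)\diff x_4$, not by the one wrapping $S^1_{(4)}$.
\end{itemize}

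The missing idea, which is the route the paper takes, is that no phase needs to be trivialized. Work directly in $4$d and cut $T^4$ along $T^3$: $Z_{T^4}[B]=\langle Z_{(T^2\setminus D^2)\times T^2}[B'']|\psi\rangle$ with $\psi=|Z_{D^2\times T^2}[B']\rangle\in\mathcal{H}_0(T^3,\text{twisted})$. So $Z_{T^4}[B]\neq0$ forces $\psi\neq0$. Conversely, $\psi\neq0$ gives $Z_{T^4}[B]=\dim\mathcal{H}_0(T^3,\text{twisted})>0$.

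Now glue two such solid tori by $f^q_{\mathrm{Dehn}}:(x_1,x_2)\mapsto(x_1,x_2+qNx_1)$. This preserves the twist, since the image of the meridian carries holonomy $g^{qN}=1$. It gives
\begin{align}
Z_{L(qN;1)\times S^1}[B]=\langle\psi|\hat{U}_{\mathrm{Dehn}}^{\,q}|\psi\rangle=\sum_k\|P_k\psi\|^2\lambda_k^{\,q},
\end{align}
where $P_k$ are the spectral projections of the unitary $\hat U_{\mathrm{Dehn}}$ on the finite-dimensional space, with distinct eigenvalues $\lambda_k$.

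If this vanished for all $q\geq1$, the Vandermonde matrix $(\lambda_k^q)$ would force $P_k\psi=0$ for every $k$, i.e.\ $\psi=0$. A nonzero lens-space value forces $\psi\neq0$, which gives the easy direction. This uses only locality, unitarity and finiteness: no universe decomposition, no $G$-crossed data, no roots of unity, not even remote detectability. Your $3$d reduction can be kept as a reformulation, but the existence of $q$ should come from this spectral argument rather than from trivializing the phases.
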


\begin{proof}[Derivation of Claim 1]

As a first step, let us show the center-vortex condensation criterion on $T^4$ (in the weak version) is equivalent to the nonvanishing of the solid torus state in the low-energy effective theory:
\begin{align}
    0 \neq \ket{Z_{D^2\times T^2}[B']} \in \mathcal{H}_0({T^3, ~\text{twisted}}),  \label{eq:nonvanishing_solid torus}
\end{align}
where $\mathcal{H}_0({T^3, ~\text{twisted}})$ denotes the twisted Hilbert space with the defect intersecting the first $S^1$ of $T^3$, and $B'$ represents the minimal $\mathbb{Z}_N^{[1]}$ defect on the open disk $D^2$.
Indeed, the locality principle implies that the twisted partition function on $T^4$ in Def.~\ref{def:center-vortex-condensation} can be written as
\begin{align}
   Z_{T^4}[B=\delta(T_{(12)}^2)] = \braket{Z_{(T^2 \setminus D^2)\times T^2 }[B'']~|~Z_{ D^2\times T^2}[B']},
\end{align}
where $B'$ and $B''$ represent the defect $B$ restricted on $D^2$ and $(T^2 \setminus D^2)$, respectively.
This relation yields $\ket{Z_{D^2\times T^2}[B']}  \neq 0$ from Def.~\ref{def:center-vortex-condensation}.
Conversely, the nonvanishing of the solid torus state $\ket{Z_{D^2\times T^2}[B']}  \neq 0$ implies that the twisted Hilbert space is nontrivial, $\dim \mathcal{H}_0({T^3, ~\text{twisted}}) > 0$, and the twisted partition function on $T^4$, which is $\dim \mathcal{H}_{T^3, ~\text{twisted}}$ in the infrared, does not vanish.
This establishes the equivalence between Def.~\ref{def:center-vortex-condensation} and (\ref{eq:nonvanishing_solid torus}).

Next, let us give a relationship between the solid torus state and the lens space.
Topologically, the lens space can be constructed by gluing two solid tori $D^2 \times S^1$ with a Dehn twist: We take the coordinate $(x_1,x_2)\in T^2=\partial(D^2\times S^1)$ with the periodicity $x_{1,2}\sim x_{1,2}+1$, and consider the diffeomorphism $f_{\mathrm{Dehn}}:T^2\to T^2, (x_1,x_2)\mapsto (x_1, x_2+Nx_1)$. 
Then, we obtain $L(qN;1)=(D^2\times S^1)\cup_{f_{\mathrm{Dehn}}^q}(D^2\times S^1)$, where $f^{q}_{\mathrm{Dehn}}=f_{\mathrm{Dehn}}\circ \cdots \circ f_{\mathrm{Dehn}}$.  
In this construction, the non-trivial $\mathbb{Z}_N^{[1]}$ defect $B$ with $\beta B \neq 0$ corresponds to introducing the $\mathbb{Z}_N^{[1]}$ defect on the open disk $D^2$ for both solid tori, which is nothing but $B'$. 
Since the Dehn twist $(f_{\mathrm{Dehn}},\mathrm{id}_{S^1}):T^2\times S^1\to T^2\times S^1$ gives a diffeomorphism on $T^3$ and maintains the $\mathbb{Z}_N^{[1]}$-twisted boundary condition by $B'$ unchanged, it induces the unitary operator 
\begin{align}
    \hat{U}_{\mathrm{Dehn}}: \mathcal{H}_0(T^3, \mathrm{twisted})\to \mathcal{H}_0(T^3,\mathrm{twisted})
\end{align}
for the Hilbert space of the local, unitary TQFTs. 
Equivalently, one may view the Dehn twist as a ``time evolution'' along the mapping cylinder, so the Dehn twist trivially induces a unitary operator on the TQFT Hilbert space.
Applying the locality principle for this Dehn-twist construction of the lens space, 
we obtain 
\begin{align}
   Z_{L(qN;1)\times S^1}[B] = \bra{Z_{D^2\times T^2}[B']} (\hat{U}_{\mathrm{Dehn}})^{q}\ket{Z_{D^2\times T^2}[B']}
\end{align}
for $q>0$. 
If one sets $q=0$, one may obtain the  twisted partition function on $S^2\times T^2$.

Now, let us show the equivalence claim. 
First, assuming the center-vortex condensation criterion, we derive the monopole condensation criterion. 
The twisted Hilbert space $\mathcal{H}_0({T^3, ~\text{twisted}})$ has finite dimension, and let us write it as $D=\dim \mathcal{H}_0({T^3, ~\text{twisted}}) < \infty$.
Using the eigenvalues $(\rme^{\im \theta_1}, \rme^{\im \theta_2}, \cdots, \rme^{\im \theta_D})$ of $\hat{U}_{\mathrm{Dehn}} $, 
the lens-space twisted partition function can be written in the following form,
\begin{align}
   Z_{L(qN;1)\times S^1}[B] = \sum_{j=1}^D c_j \rme^{\im q \theta_j},
\end{align}
with some nonnegative constants $\{ c_j\}_{j=1,\cdots,D}$. 
If this was identically equal to $0$ for all $q>0$, it implies $c_j=0$ and contradicts $\ket{Z_{D^2\times T^2}[B']}\not=0$. 
Therefore, at least for some $q>0$, the twisted partition function of the low-energy effective theory satisfies $Z_{M_4 =L(qN;1)\times S^1}[B] \neq 0$, which is nothing but the monopole condensation criterion.

The converse is immediate.
If $Z_{M_4 =L(qN;1)\times S^1}[B] \neq 0$ for some $q$, the solid torus state $\ket{Z_{D^2\times T^2}[B']}$ cannot vanish, which is identical to the weak-version of the center-vortex condensation criterion on $T^4$.
\end{proof}

One might suspect that a counterexample can be constructed if the TQFT with an infinite-dimensional Hilbert space ($D=\infty$) is allowed.
However, as we discuss in the next section~\ref{sec:center-vortex-model}, such a vacuum can be understood as a strong-coupling limit of the gapless Coulomb phase. 
Without finite dimensionality, the distinction between gapped and gapless phases is no longer clear, 
and we strictly reserve the term ``gapped phase'' for cases where the low-energy theory is governed by a well-behaved finite-dimensional TQFT.

\subsubsection{Relation between the strong center-vortex/monopole condensations}

Here, we show that the strong center-vortex condensation implies the strong monopole condensation. Furthermore, the strong monopole condensation turns out to be equivalent with the unbroken $\mathbb{Z}_N^{[1]}$ symmetry. 
For this proof, the structural theorem for the $(3+1)$d low-energy TQFTs with the remote detectability plays the essential role. 

\begin{claim}
\label{claim:strong-vortex-monopole-criterion}
For the $(3+1)$d gapped QFTs with the $\mathbb{Z}_N^{[1]}$ symmetry, we assume that the low-energy TQFT satisfies the remote detectability. Then, 
\begin{align}
   &~~~~~ \text{strong center-vortex condensation (Def.~\ref{def:center-vortex-condensation-strong})} \notag \\
& \Rightarrow  \text{Unbroken $\mathbb{Z}_N^{[1]}$ symmetry}\notag\\
& \Leftrightarrow    \text{strong charge-$[q]_{Nq\mathbb{Z}}$ monopole condensation for some $q>0$ (Def.~\ref{def:lens-space-criterion-strong})}  .
\end{align}
\end{claim}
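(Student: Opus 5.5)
The first implication, strong center-vortex condensation $\Rightarrow$ unbroken $\mathbb{Z}_N^{[1]}$, was already argued in Sec.~\ref{sec:detect_condensation}, and I will only recall it. The ratio in Def.~\ref{def:center-vortex-condensation-strong} equals $\tr_{\mathcal{H}_0(T^3)}[\hat U]/\dim\mathcal{H}_0(T^3)$. Since $\hat U$ is unitary, this ratio equals $1$ only if $\hat U=\bm 1$ on $\mathcal{H}_0(T^3)$. Remote detectability then forbids any deconfined $\mathbb{Z}_N^{[1]}$-charged line: wrapping such a line on a cycle of $T^3$ would produce a nonzero state with a nontrivial eigenvalue of $\hat U$. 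The same reasoning, applied with $Y_3=L(qN;1)$ and using \eqref{eq:lens_partition_function_trace}, gives strong monopole condensation $\Rightarrow$ unbroken symmetry. The ratio equals $1$ only if the generator $\hat U(B_{L(qN;1)\perp})$, supported on the $\mathbb{Z}_N$ $2$-cycle of the lens space, acts as the identity on $\mathcal{H}_0(L(qN;1))$. I then have to show that a deconfined charged line would violate this. The idea is to wrap the line on $\gamma$. Its linking with the $\mathbb{Z}_N$-valued $2$-cycle dual to $\gamma$ is $1 \bmod N$, so the resulting state, which is nonzero by remote detectability (the footnote argument), has a nontrivial eigenvalue. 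The point to check here is that the torsion linking pairing on $L(qN;1)$ reduced mod $N$ indeed gives the phase $e^{2\pi i/N}$ (or a generator of $\mathbb{Z}_N$). I expect this to follow from $\int_{\gamma\times S^1}B=1$.

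The remaining direction, unbroken $\mathbb{Z}_N^{[1]}$ $\Rightarrow$ strong monopole condensation for some $q$, is the real content. I would use the structure theorem: the IR TQFT is a discrete $2$-group gauge theory. Unbroken symmetry means the $\mathbb{Z}_N^{[1]}$ acts trivially on all deconfined lines. The background $B$ must then couple through the surface/$2$-form sector, i.e.\ through an extension of the $2$-form gauge group, or through an SPT-type term $\frac{2\pi i p}{2N}\int \mathcal{P}(B)$ together with a magnetic coupling such as $a\cup\beta B$ or $a_1\cup a_2\cup B$ as in the earlier counterexample. My plan is to write $Z^{(\mathrm{TQFT})}_{L(qN;1)\times S^1}[B]$ as a state sum over flat $2$-group connections on $L(qN;1)\times S^1$. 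I would then choose $q$ divisible by a suitable integer $m$ (for instance the exponent of the finite $2$-group, times $2N$ to dispose of the Pontryagin-square phase). For such $q$, every torsion pairing involving $\beta B=q\,\delta(\gamma)$ becomes trivial modulo the group orders, and $\mathcal{P}(B)$ vanishes on $L(qN;1)\times S^1$ because $B=A\cup dx_4$ squares to zero. Every flat configuration would then contribute with weight $1$, as in the untwisted sum. The ratio would therefore be $1$, provided the counting of configurations is unaffected by the twist.

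The main obstacle is exactly this counting. $B$ can enter the flatness constraint itself, for example through a $2$-group extension or a fractionalized coupling $dA=\ldots+B$. The twisted sector may then have a different number of configurations unless $q$ is chosen so that $B$ pulled back to the relevant cocycles becomes cohomologically trivial. I would try to show that, with unbroken symmetry, $B$ enters only via $\beta B$ or via torsion pairings with $H^1(L(qN;1);G)$. Taking $q$ a multiple of $|G|$ then kills all of these, because $H^1(L(qN;1);G)=\mathrm{Hom}(\mathbb{Z}_{qN},G)$ has elements whose order divides $|G|$. If that fails, I would fall back to the operator picture. There I would use the Dehn-twist representation $\bra{Z_{D^2\times T^2}}\hat U_{\mathrm{Dehn}}^q\ket{Z_{D^2\times T^2}}$ from the proof of Claim~\ref{claim:weak-vortex-monopole-criterion}. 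Finiteness makes $\hat U_{\mathrm{Dehn}}$ a finite-order unitary (modular data of a $2$-group gauge theory), so some power $q$ gives $\hat U_{\mathrm{Dehn}}^q=\bm 1$. It would then remain to compare the norm of the solid-torus state with $\dim\mathcal{H}_0(L(qN;1))$ via the $q=0$ case, $S^2\times T^2$, where unbroken symmetry should give ratio $1$.
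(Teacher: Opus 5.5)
Your arguments for strong center-vortex condensation $\Rightarrow$ unbroken $\mathbb{Z}_N^{[1]}$ and strong monopole condensation $\Rightarrow$ unbroken $\mathbb{Z}_N^{[1]}$ are the paper's. The linking check you flag is immediate: $\langle A_{\mathbb{Z}_N},\gamma\rangle=\int_{\gamma\times S^1}B=1$.

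The gap is in unbroken $\Rightarrow$ strong monopole condensation. Your state sum gives ratio $1$ only if $B$ enters as a phase on a $B$-independent set of flat configurations, and you leave exactly this open. Moreover, the ``extension of the $2$-form gauge group'' on your own list really does enter the constraints in the $2$-form presentation. In the level-$p$ BF theory of Sec.~\ref{sec:highercharge}, $\beta_{\mathbb{Z}}B$ obstructs flat $b$, so a direct count there is $B$-dependent.

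The missing step, which the paper imports from the literature, has two parts:
\begin{itemize}
\item Pass to the canonical presentation as a finite $G$ $1$-form gauge theory, via the Kapustin--Thorngren duality.
\item Use that unbroken $\mathbb{Z}_N^{[1]}$ means no Wilson line is charged. Then $B$ can only couple through symmetry fractionalization, namely $\frac{2\pi\im}{N}\int B\cup\eta(a)$ with $\eta\in H^2(\sfB G;\mathbb{Z}_N)$, and $2\pi\im\int\langle\phi_\theta(\beta_{\mathbb{Z}}B),a\rangle$, plus a possible SPT term.
\end{itemize}
All of these are phases. So the ratio in \eqref{eq:lens_partition_function_trace} is the average of a phase over flat $a$, and it equals $1$ iff that phase is trivial for every flat $a$.

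Even granting this, your ``torsion pairings with $H^1$'' only disposes of the $\theta$ term, via $q\langle\phi_\theta(1),g\rangle=\langle\phi_\theta(1),g^q\rangle$. The $\eta$ term needs the paper's extra step. A flat $a$ on $L(qN;1)$ is fixed by $g=\mathrm{hol}_\gamma(a)$ and factors through $\sfB\langle g\rangle$. Since $[\Sigma]$ corresponds to $q\in\mathrm{Tor}(\mathbb{Z}_{qN},\mathbb{Z}_N)\subset\mathbb{Z}_{qN}$, its image $a_*[\Sigma]\in H_2(\sfB\mathbb{Z}_{\ord(g)};\mathbb{Z}_N)\simeq\mathrm{Tor}(\mathbb{Z}_{\ord(g)},\mathbb{Z}_N)$ is $q \bmod \ord(g)$. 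Hence $\int_\Sigma\eta(a)=0$ once $\ord(g)\mid q$.

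Given that input, your choice of $q$ (a multiple of $|G|$, and even) is correct. It is in fact safer than the paper's $q=\gcd(\lcm_{g}\ord(g),N)$, which is too small in general. The condition that actually comes out is $\ord(g)\mid q$, not $\gcd(\ord(g),N)\mid q$. For instance, the level-$4$ BF theory of Sec.~\ref{sec:highercharge} with $N=2$ has vanishing lens-space partition function at $q=2=\gcd(4,2)$, since $C_{2,4}=4$.

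The SPT term, which the paper's proof does not list, also matters. For even $N$ one finds $\int_{L(qN;1)\times S^1}P_2(B)\equiv Nq \pmod{2N}$, which gives a phase $(-1)^{kq}$. So ``$B\cup B=0$'' is not the right reason, since it holds only mod $N$. Taking $q$ even is what cures it.

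Your Dehn-twist fallback is a genuinely different route and would bypass the classification of couplings. Suppose $\hat U_{\mathrm{Dehn}}^{q_0}=\bm{1}$ on both the twisted and the untwisted $\mathcal{H}_0(T^3)$. Then both lens-space partition functions at $q=q_0$ reduce to their $S^2\times T^2$ counterparts. That ratio is $1$, because $\mathcal{H}_0(S^2\times S^1)$ is spanned by genuine lines wrapping $S^1$ and $\hat U(S^2)$ measures their charge. Finite order, however, does not follow from finite-dimensionality. It must come from the finite-gauge-theory structure (mapping classes permute flat configurations up to roots of unity), and you have to establish it in the $B$-twisted sector as well.
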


\begin{proof}[Derivation of Claim 2] 
Since we have already seen that both the strong center-vortex and monopole condensations imply the unbroken $\mathbb{Z}_N^{[1]}$ symmetry, 
the remaining problem is to prove the strong monopole condensation from the unbroken $\mathbb{Z}_{N}^{[1]}$ symmetry. 

The $(3+1)$d local, unitary TQFTs with remote detectability are described by finite $2$-group gauge theories~\cite{Lan:2018vjb, Lan:2018bui,Johnson-Freyd:2020usu}. 
Thanks to the duality~\cite{Kapustin:2013uxa}, the $(3+1)$d $2$-group gauge theory has a canonical description with the $1$-form discrete $G$ gauge field $a$ (and also with a $\mathbb{Z}_2$ $2$-form gauge field if some excitations are fermionic)~\cite{Thorngren:2020aph}. 
Assuming the unbroken $\mathbb{Z}_N^{[1]}$ symmetry, all the line operators in the low-energy TQFT commute with the $\mathbb{Z}_N^{[1]}$ symmetry operator. 
Then, the possible coupling for the $\mathbb{Z}_N^{[1]}$ symmetry to the low-energy higher-gauge theory is described by the symmetry fractionalization~\cite{Chen:2014wse, Barkeshli:2014cna, Hsin:2019fhf, Delmastro:2022pfo, Hsin:2024aqb, Brennan:2025acl}, which is described by the magnetic-type coupling with the background gauge field $B$~\cite{Thorngren:2020aph}: One of the possibility is given by the form, 
\begin{align}
    \frac{2\pi\im}{N}\int_{M_4} B\cup \eta(a),
    \label{eq:magnetic_coupling_eta}
\end{align}
where $\eta\in H^2(\sfB G, \mathbb{Z}_N)$ and $\eta(a)\in H^2(M_4,\mathbb{Z}_N)$ is its pullback by the $G$ gauge field $a:M_4\to \sfB G$. 
The other possibility is given by the form, 
\begin{align}
    2\pi \im \int \langle\theta(B), a\rangle, 
    \label{eq:magnetic_coupling_Bock}
\end{align}
where $\theta\in H^3(\sfB^2\mathbb{Z}_N, \hat{G})\simeq \mathrm{Ext}_{\mathbb{Z}}(H_2(\sfB^2\mathbb{Z}_N), \hat{G})$ with $\hat{G}=\mathrm{Hom}(G,U(1))\simeq \prod_i \mathbb{Z}_{m_i}$ and $\langle , \rangle$ is the canonical pairing $\hat{G}\times G\to U(1)$.

Now, let us consider the lens-space twisted partition function on $L(qN;1)\times S^1$, and we have chosen $B=\delta(\Sigma)$ for $\Sigma\in H_2(L(qN;1); \mathbb{Z}_N)\subset H_2(L(qN;1)\times S^1; \mathbb{Z}_N)$ such that $\beta_{\mathbb{Z}}(B)=q\delta(\gamma)$ with $\gamma\in H_1(L(qN;1))\simeq \mathbb{Z}_{qN}$ following the setup of Def.~\ref{def:lens-space-criterion-strong}. 
We will show that both magnetic couplings are trivial for $q=\mathrm{gcd}(\lcm_{g\in G}(\ord(g)),N)$, which proves $\hat{U}(\Sigma)=1$ on $\mathcal{H}_0(L(qN;1))$ and then the strong monopole condensation follows from \eqref{eq:lens_partition_function_trace}.

Let us first work on the magnetic coupling $2\pi \im\int \langle \theta(B), a\rangle$. 
We should note that $H^3(\sfB^2\mathbb{Z}_N,\mathbb{Z})\simeq \mathrm{Ext}(H_2(B^2\mathbb{Z}_N),\mathbb{Z})\simeq \mathbb{Z}_N$, so the Bockstein map $\beta_{\mathbb{Z}}:H^2(\sfB^2\mathbb{Z}_N,\mathbb{Z}_N) \to H^3(\sfB^2\mathbb{Z}_N,\mathbb{Z})$ gives an isomorphism. 
Since $H^3(\sfB^2\mathbb{Z}_N, \hat{G})\simeq \mathrm{Ext}(\mathbb{Z}_N,\hat{G})\simeq \hat{G}/N\hat{G}$, 
we can obtain $\theta\in H^3(\sfB^2\mathbb{Z}_N,\hat{G})$ by sending a generator of $H^3(\sfB^2\mathbb{Z}_N,\mathbb{Z})$ with choosing a homomorphism $\phi_{\theta}:\mathbb{Z}\to \hat{G}$ (non-canonically). %such that $\phi_{\theta}(N)=0\in \hat{G}$.
Thus, we obtain $\theta(B)\in H^3(M_4,\hat{G})$ as 
\begin{align}
    \theta(B)=\phi_{\theta}(\beta_{\mathbb{Z}}B). 
\end{align}
Substituting this expression for our lens-space setup, we find 
\begin{align}
    2\pi \im \int_{L(qN;1)\times S^1}\langle \theta(B),a\rangle 
    &= 2\pi \im q\underbrace{\int_{\gamma}\langle\phi_{\theta}(1),a\rangle}_{\in \frac{1}{\gcd(N,\mathrm{lcm}(m_i))}\mathbb{Z}},
\end{align}
where we use $\beta_{\mathbb{Z}}B=q\delta(\gamma)\in\mathbb{Z}_{qN}$. 
The quantization of the integral comes from $\hat{G}\simeq \prod_i \mathbb{Z}_{m_i}$ and $B:M_4\to \sfB^2\mathbb{Z}_N$. 
Thus, if $q$ is a multiple of $\gcd(N,\mathrm{lcm}(m_i))$, this magnetic coupling becomes trivial, and $q=\gcd(\lcm_{g\in G}(\ord(g)),N)$ satisfies it. 

Next, let us consider the other magnetic coupling,
\begin{align}
    \frac{2\pi\im}{N}\int_{L(qN;1)\times S^1} B\cup \eta(a) = \frac{2\pi \im}{N}\int_{\Sigma}\eta(a). 
\end{align}
Since $\Sigma\subset L(qN;1)\times \{\mathrm{pt}\}$, this integral depends only on the restriction of $a$ to the lens space $L(qN;1)$ of the specific time slice, so let us forget about the $S^1$ factor below. 
Then, the flat $G$ gauge field $a$ on $L(qN;1)$ is totally determined by the holonomy, $g=\mathcal{P}\exp\im\int_\gamma a\in G$. 
Considering the cyclic group generated by $g$, $\mathbb{Z}_{\ord(g)}\simeq \langle g\rangle\subset G$, 
$a$ can be regarded as the $\mathbb{Z}_{\ord(g)}$ gauge field after a suitable $G$ gauge transformation, and we can think of $\eta\in H^2(\sfB G, \mathbb{Z}_N)$ as an element of $ H^2(\sfB \mathbb{Z}_{\ord(g)},\mathbb{Z}_N)$ to compute $\eta(a)\in H^2(M_3,\mathbb{Z}_N)$ as long as $a:M_3\to \sfB\mathbb{Z}_{\ord(g)}\subset \sfB G$.\footnote{Here, it is crucial that we have forgotten the $S^1$ factor as $\Sigma$ localizes on the specific time slice. 
The $4$-manifold $L(qN;1)\times S^1$ also has a $2$-surface $\Sigma'=\gamma\times S^1$, and if we evaluate $\eta(a)$ on $\Sigma'$, the logic here does not apply: The presence of two independent $1$-cycles can give the topological invariant by the cup product for $\eta(a)$. 
Indeed, we believe that the lens-space twisted partition function with $B=b_1 \delta(\Sigma)+b_2 \delta(\Sigma')$ can fully detect both of the magnetic couplings~\eqref{eq:magnetic_coupling_eta} and \eqref{eq:magnetic_coupling_Bock} by changing the values of $q$. 
}  In our context, $M_3=L(qN;1)\times \{\mathrm{pt}\}\subset L(qN;1)\times S^1$.

This $\eta \in H^2(\sfB \mathbb{Z}_{\ord(g)}, \mathbb{Z}_N)$ is proportional to the Bockstein element. 
To show this, we notice that the Bockstein map $\hat{\beta}_{\mathbb{Z}}: H^1(\sfB\mathbb{Z}_{\ord(g)}, \mathbb{Z}_{\ord(g)}) \rightarrow H^2(\sfB\mathbb{Z}_{\ord(g)}, \mathbb{Z})$ is an isomorphism.
As the mod-$N$ reduction $\rho_N$ is surjective, $\eta$ can be written as $\eta = \rho_N(\hat{\beta}_{\mathbb{Z}}(c\, \iota))$ for some integer $c \in \mathbb{Z}$, where $\iota \in H^1(\sfB\mathbb{Z}_{\ord(g)}, \mathbb{Z}_{\ord(g)})$ is the generator.
Whereas $a: M_3 \to \sfB\mathbb{Z}_{\ord(g)}$, it is the convention to use the same notation for $a = a^*\iota \in H^1(M_3, \mathbb{Z}_{\ord(g)})$. In this notation, we find that 
\begin{align}
    \eta(a) = a^* \eta = c\, \beta_{\mathbb{Z}}(a)~(\operatorname{mod}N),
\end{align}
with the Bockstein map $\beta_{\mathbb{Z}}: H^1(M_3, \mathbb{Z}_{\ord(g)}) \rightarrow H^2(M_3)$ associated with $\mathbb{Z} \rightarrow \mathbb{Z} \rightarrow \mathbb{Z}_{\ord(g)}$.

Meanwhile, this Bockstein $\beta_{\mathbb{Z}}$ is essentially the homomorphism from $\mathbb{Z}_{\ord(g)} $ to $\mathbb{Z}_{Nq}$, from which we can write
\begin{align}
   \beta_{\mathbb{Z}}(a) = \frac{N q \ell }{\gcd(\ord(g),Nq)} \in  \mathbb{Z}_{Nq} = H^2(M_3),
\end{align}
with some $\ell \in \mathbb{Z}$.
This indicates that, if $q$ is a multiple of $\gcd(\ord(g),N)$, $\beta_{\mathbb{Z}}(a)$ has a factor of $N$.
Therefore, we have
\begin{align}
    \eta(a) = 0 \in H^2(M_3,\mathbb{Z}_N) ~~~\text{if}~\gcd(\ord(g),N)|q.
\end{align}
This indeed establishes that this fractionalization becomes trivial when $q$ is a multiple of $\gcd(\ord(g),N)$ for all $g \in G$, and thus we can take $q=\gcd(\lcm_{g\in G}(\ord(g)),N)$.
\end{proof}

\subsection{Comment on the unique gapped vacuum}

The confined phases are often defined as the $\mathbb{Z}_N^{[1]}$ unbroken phases.
Typically, e.g., within the conventional Wilson-'t Hooft classification framework~\cite{tHooft:1979rtg, Nguyen:2023fun}, the confined phase is assumed to have the uniquely gapped ground state on any closed spatial $3$-manifolds, or, equivalently, the deconfined nontrivial lines are totally absent whether or not they are electrically charged. 
One can judge if this condition is satisfied by looking at the Gu-Wen ratios~\eqref{eq:GuWen_ratio}, and it is sufficient to check it for $Y_3=T^3$ for local, unitary TQFTs with the remote detectability: 
If the ground state degeneracy on $T^3$ is one, the underlying TQFT is necessarily invertible and reduces to a symmetry-protected topological (SPT) phase, see e.g. \cite{Lan:2018vjb}.

Then, the possible low-energy behavior of $Z_{M_4}[B]$ is completely determined up to the gravitational counterterms and described by the classical topological action for the background gauge field: For the $\mathbb{Z}_N^{[1]}$-symmetry background $B$, such a possible response is 
\begin{align}
    Z_{M_4}[B]=\exp\left[\frac{2\pi \im k}{N}\int_{M_4}\frac{1}{2}P_2(B)\right],  \label{eq:SPTphase}
\end{align}
with $k\in \mathbb{Z}_N$ and $P_2(B)=\hat{B}\cup \hat{B}+\hat{B}\cup_1\diff \hat{B}$ is the Pontryagin square.

The condition of the unique gapped vacuum is the strongest condition compared to other criteria proposed in this paper: When the nonzero mass gap is present, 
\begin{align}
    &\text{Unique gapped vacuum} \notag\\
    \Rightarrow &~
    \text{Def.}~\ref{def:center-vortex-condensation-strong}~~\text{(strong  center-vortex condensation)} \notag\\
    \Rightarrow
    & 
    \left\{\begin{array}{l}
        \text{Unbroken $\mathbb{Z}_N^{[1]}$ symmetry} ~~\Leftrightarrow \\
        \text{Def.}~\ref{def:lens-space-criterion-strong}~~\text{(strong charge-$[q]_{Nq\mathbb{Z}}$ monopole condensation for some $q>0$)}  
    \end{array}\right.\notag\\
    \Rightarrow& 
    \left\{\begin{array}{l}
        \text{Def.}~\ref{def:center-vortex-condensation}~~ \text{(weak center-vortex condensation)}~~\Leftrightarrow\\
         \text{Def.}~\ref{def:lens-space-criterion}~~\text{(weak charge-$[q]_{Nq\mathbb{Z}}$ monopole condensation for some $q>0$)}.
    \end{array}\right.  \label{eq:summary_relation}
\end{align}
Indeed, from the SPT response to the background field (\ref{eq:SPTphase}), it is immediate that both center-vortex condensation and monopole condensation are satisfied for the unique gapped phase as $|Z_{Y_3\times S^1}[B]|=Z_{Y_3\times S^1}[0] \neq 0$ for any choice of $Y_3$ and $B$.
Following our criterion, the unique gapped phase has both center-vortex and monopole condensation.

\section{Center-vortex model with monopoles}
\label{sec:center-vortex-model}

\subsection{Formal-sum representation of center-vortex model}
\label{sec:formal-sum_center-vortex}

Since the main argument is rather abstract, it would be useful to explicitly examine certain aspects within the framework of the center-vortex percolating model with monopole worldlines~\cite{DelDebbio:1997ke, Ambjorn:1999ym, deForcrand:2000pg, Engelhardt:1999wr, Oxman:2018dzp, Junior:2019fty, Junior:2022bol}. 
Ignoring many details (such as thickness of center vortices), the center-vortex model can be formally represented as follows: 
The dynamical degrees of freedom (under some lattice regularization) are,
\begin{itemize}
    \item 2-chain $\vec{\Sigma} \in C_2(M_4; \Lambda_{\mathrm{weights}})$: center-vortex worldsheets (localized magnetic flux),
    \item 1-cycle $\vec{C} \in C_1(M_4; \Lambda_{\mathrm{roots}})$: monopole worldlines, satisfying $\vec{C} = \partial \vec{\Sigma}$,
\end{itemize}
and the partition function is given by
\begin{align}
    Z_{\text{model}}= \sum_{\substack{\vec{\Sigma} \in C_2(M_4; \Lambda_{\mathrm{weights}})  \\ \vec{C} \in C_1(M_4; \Lambda_{\mathrm{roots}}) \\
    \text{satisfying~} \vec{C} = \partial \vec{\Sigma}}} \rme^{-S_{\mathrm{vortex}}[\vec{\Sigma}] - S_{\mathrm{monopole}}[\vec{C}]} \left( \sum_{b \in H^2 (M_4;\mathbb{Z}_N)} \rme^{\frac{2\pi \im }{N}b(\vec{\Sigma})}\right),
    \label{eq:center-vortex-model-untwisted}
\end{align}
where $S_{\mathrm{vortex}}[\vec{\Sigma}]$ and $S_{\mathrm{monopole}}[\vec{C}]$ are some worldsheet or worldline weight functions (and we do not need to specify them for the discussion below).
This expression describes the ensemble comprising all possible configurations of center-vortex networks connected by monopole junctions.

The last factor with the dynamical $\mathbb{Z}_N$-valued $2$-form gauge field $b$ is introduced so that the global magnetic flux must be a root-vector one, i.e. the $N$-ality of the global magnetic flux is trivial since $\mathbb{Z}_N \simeq \Lambda_{\mathrm{weights}}/\Lambda_{\mathrm{roots}}$.
Without this constraint, i.e., if the global magnetic flux $\vec{\Sigma}$ is weight-vector-valued, the cocycle condition for the $SU(N)$ bundle is violated by $\mathbb{Z}_N$ center elements, which give the $PSU(N)$ gauge bundle instead of the $SU(N)$ bundle. 
One can pass from $PSU(N)$ gauge theory to $SU(N)$ gauge theory by gauging the magnetic 1-form symmetry $( \mathbb{Z}_N^{[1]} )_{\mathrm{mag}}$, which is nothing but the last factor of \eqref{eq:center-vortex-model-untwisted}. 
From this viewpoint, we can easily couple this model to the electric $\mathbb{Z}_N^{[1]} $ symmetry background $B \in H^2 (M_4; \mathbb{Z}_N)$:
\begin{align}
    Z_{\text{model}}[B]= \sum_{\substack{\vec{\Sigma} \in C_2(M_4; \Lambda_{\mathrm{weights}})  \\ \vec{C} \in C_1(M_4; \Lambda_{\mathrm{roots}}) \\
    \text{satisfying~} \vec{C} = \partial \vec{\Sigma}}} \rme^{-S_{\mathrm{vortex}}[\vec{\Sigma}] - S_{\mathrm{monopole}}[\vec{C}]} \left( \sum_{b \in H^2 (M_4;\mathbb{Z}_N)} \rme^{\frac{2\pi \im }{N}b(\vec{\Sigma}) - \frac{2 \pi \im}{N} (B \cup b)(M_4)}\right). 
    \label{eq:center-vortex-model}
\end{align}

To see the correspondence to the argument in the previous section, it is instructive to use the Poincar\'e dual $\Sigma_B \in H_2(M_4; \mathbb{Z}_N)$ and consider the lift to $\hat{\Sigma}_B \in C_2(M_4; \Lambda_{\mathrm{weights}})$: $\vec{\mu_1}\cdot \hat{\Sigma}_B=\frac{1}{N}\Sigma_{B}$ mod $1$.
If the Bockstein map $\beta_{\mathbb{Z}}B$ does not vanish, any lift $\hat{\Sigma}_B$ of $\Sigma_B$ is not a cycle: $\partial \hat{\Sigma}_B \neq 0$. Since the worldline $\partial \hat{\Sigma}_B $ has trivial $N$-ality, we can regard $\partial \hat{\Sigma}_B \in C_1(M_4; \Lambda_{\mathrm{roots}}) $.
Using this lift $\hat{\Sigma}_B$, we can rewrite $Z_{\text{model}}[B]$ by shifting $(\vec{\Sigma},\vec{C}) \mapsto (\vec{\Sigma}+\hat{\Sigma}_B,\vec{C}+\partial \hat{\Sigma}_B) $ and we obtain
\begin{align}
    Z_{\text{model}}[B] &= \sum_{\substack{\vec{\Sigma} \in C_2(M_4; \Lambda_{\mathrm{weights}})  \\ \vec{C} \in C_1(M_4; \Lambda_{\mathrm{roots}}) \\
    \text{satisfying~} \vec{C} = \partial \vec{\Sigma}}} \rme^{-S_{\mathrm{vortex}}[\vec{\Sigma}] - S_{\mathrm{monopole}}[\vec{C}]} \left( \sum_{b \in H^2 (M_4;\mathbb{Z}_N)} \rme^{\frac{2\pi \im }{N} b(\vec{\Sigma} -  \hat{\Sigma}_B) }\right) \notag \\
    &= \sum_{\substack{\vec{\Sigma} \in C_2(M_4; \Lambda_{\mathrm{weights}})  \\ \vec{C} \in C_1(M_4; \Lambda_{\mathrm{roots}}) \\
    \text{satisfying~} \vec{C} = \partial \vec{\Sigma}}} \rme^{-S_{\mathrm{vortex}}[\vec{\Sigma}+\hat{\Sigma}_B] - S_{\mathrm{monopole}}[\vec{C} + \partial \hat{\Sigma}_B]} \left( \sum_{b \in H^2 (M_4;\mathbb{Z}_N)} \rme^{\frac{2\pi \im }{N} b(\vec{\Sigma} ) }\right),
\end{align}
which clearly shows that $\hat{\Sigma}_B$ and $\partial \hat{\Sigma}_B$ serve as the source for the background center-vortex and monopole excitations, respectively. 
This gives another justification of our criteria for the center-vortex condensation (Def.~\ref{def:center-vortex-condensation}) and the monopole condensation (Def.~\ref{def:lens-space-criterion}), which is complementary to the continuum Higgs-phase computation in Sec.~\ref{sec:justification}.
Note that $Z_{\text{model}}[B]$ itself does not depend on the lift of $\Sigma_B$ because the original expression (\ref{eq:center-vortex-model}) is obviously independent. 
In the above representation, one can also check that a change of the lift can be absorbed by a redefinition of $(\vec{\Sigma},\vec{C})$.

Under the assumption that the system is gapped and satisfies center-vortex condensation, the locality yields a nontrivial consequence of the monopole condensation as shown in Claim~\ref{claim:weak-vortex-monopole-criterion} and~\ref{claim:strong-vortex-monopole-criterion}:\footnote{If it is further assumed to be uniquely gapped, the partition function is just an SPT phase: $Z_{\text{model},~M_4}[B] \neq 0$ for any $B$, which obviously implies monopole condensation.} 
$Z_{\text{model},~L(qN;1)\times S^1}[B] \neq 0$ (in the large-volume limit) for some $q>0$.

It would be instructive to examine the situation where the center vortices are proliferated without any monopoles: $S_{\mathrm{vortex}}[\vec{\Sigma}] \rightarrow 0$ and $S_{\mathrm{monopole}}[\vec{C}] \rightarrow (+ \infty) \delta_{\vec{C},0} $.
In this limit, the twisted partition function can be calculated as follows:
\begin{align}
    Z_{\text{model}}[B]= \sum_{\substack{\vec{\Sigma} \in C_2(M_4; \Lambda_{\mathrm{weights}})  \\
    \text{satisfying~} \partial \vec{\Sigma} =0}} \left( \sum_{b \in H^2 (M_4;\mathbb{Z}_N)} \rme^{\frac{2\pi \im }{N} b(\vec{\Sigma}) -  b(\Sigma_B) }\right). 
\end{align}
Remember that $b(\vec{\Sigma})$ should be understood as $b([\vec{\Sigma}])$, where $[\vec{\Sigma}] \in H_2(M_4; \mathbb{Z}_N)$ is obtained by taking modulo $\Lambda_{\mathrm{roots}}$.
We can rewrite the partition function as,
\begin{align}
    Z_{\text{model}}[B]= \sum_{\substack{\vec{\Sigma} \in C_2(M_4; \Lambda_{\mathrm{weights}})  \\
    \text{satisfying~} \partial \vec{\Sigma} =0}} \delta_{H_2(M_4;\mathbb{Z}_N)}
    \left( [\vec{\Sigma}] - \Sigma_B\right),
\end{align}
where $\delta_{H_2(M_4;\mathbb{Z}_N)}(\cdot)$ denotes the Kronecker delta in the $H_2(M_4;\mathbb{Z}_N)$ space.
From this expression, we can absorb $\Sigma_B$ by the shift $\vec{\Sigma} \mapsto \vec{\Sigma} + \hat{\Sigma}_B$ if and only if the Bockstein vanishes $\beta B =0$.
Therefore, we obtain
\begin{align}
    Z_{\text{model}}[B] \propto \delta_{H^3(M_4;\Lambda_{\mathrm{roots}})}(\beta B).
\end{align}
At first sight, this partition function may seem to be the one in a gapped phase without monopole condensation.
However, we should interpret this theory as a limit of the Coulomb phase, rather than a TQFT.
Indeed, the partition function can be rewritten as follows:
\begin{align}
    Z_{\text{model}}[B] &= \sum_{\tilde{a} \in H^1(M_4;\Lambda_{\mathrm{weights}})} \rme^{\im \int_{M_4} \tilde{a} \cup \beta B}= \int_{\text{flat $U(1)^{N-1}$ gauge}} \mathcal{D}\tilde{a}~\rme^{\im \int_{M_4} \tilde{a} \cup \beta B} \notag\\
    &=\lim_{\tilde{g}\to 0}\int \Diff \tilde{a} \exp\left[-\frac{1}{2\tilde{g}^2}\int |\diff \tilde{a}|^2 +\im \int \tilde{a}\cup \beta B\right]. 
\end{align}
In the first line, we have introduced the flat $U(1)^{N-1}$ gauge field, which consists only of the holonomy degrees of freedom, and this gives a Fourier representation of $\delta_{H^3(M_4;\Lambda_{\mathrm{roots}})}(\beta B)$. 
In the second line, we regard it as the weak-coupling limit of the standard Maxwell theory. 
By using the lift $\hat{B}$, we can write $\int_{M_4} a \cup \beta B = \frac{1}{N}\int_{M_4} \diff a \cup \hat{B}$ as $\beta B = \frac{1}{N}\diff \hat{B}$, and thus $B$ couples to $\tilde{a}$ magnetically. Thus, the weak-coupling limit is taken in the magnetic description of the $U(1)^{N-1}$ Maxwell gauge theory.
Via abelian duality, it is the strong-coupling limit of the electric $U(1)^{N-1}$ gauge theory, where $B$ couples electrically.
Therefore, we interpret this partition function as the formal strong-coupling limit of the Coulomb phase: Since the magnetic-field term is dropped from the Hamiltonian in the strong coupling limit, the nonzero magnetic-flux insertion does not cost the energy and thus the center-vortex condensation criterion (Def.~\ref{def:center-vortex-condensation}) is achieved, while it violates the monopole condensation criterion (Def.~\ref{def:lens-space-criterion}). 
Our result is consistent with the recent observation~\cite{Nguyen:2024ikq, Giansiracusa:2025hfj} on the emergence of the Coulomb phase in the $\mathbb{Z}_N$ lattice gauge theories for any $N>2$ by suppressing only monopoles with maintaining center vortices.

Thus, the partition function $Z_{\text{model}}[B]=\delta_{H^3(M_4;\Lambda_{\mathrm{roots}})}(\beta B)$ for the center-vortex ensembles without monopoles should be regarded as the limit of the gapless Coulomb phase rather than a gapped phase.
It requires infinite degrees of freedom even in the low-energy limit, which violates the finiteness assumption (Assumption \ref{assumption:finiteness}), and thus the result is consistent with our Claim~\ref{claim:weak-vortex-monopole-criterion}.

\subsection{Lens-space partition function and higher-charge monopole condensation}
\label{sec:highercharge}

In confinement phases, the condensation of monopoles with minimal magnetic charges is usually assumed, and then the magnetic charge lattice of condensates is identical with the one of all dynamical monopoles, $\Lambda_{\mathrm{cond.}} = \Lambda_{\mathrm{roots}}$. 
Of course, nothing forbids us to consider exotic confinement phases, where the condensate only forms a sublattice of magnetic charges, $\Lambda_{\mathrm{cond.}} \subset \Lambda_{\mathrm{roots}}$, and such examples include oblique confinement phases by dyon condensation with higher magnetic charges~\cite{tHooft:1981bkw, Cardy:1981qy, Cardy:1981fd, Honda:2020txe, Hayashi:2022fkw}.

In this section, let us examine such possibilities based on the center-vortex model~\eqref{eq:center-vortex-model}, and we shall reveal the correspondence between the condensed magnetic charge $\Lambda_{\mathrm{cond.}}$ and the lens-space partition function.
To represent the restriction of the condensed magnetic charges, 
we consider the limit $S_{\mathrm{monopole}}[\vec{C}]=\infty$ if $\vec{C}\not\in C_1(M_4;\Lambda_{\mathrm{cond.}})$ in \eqref{eq:center-vortex-model}:
\begin{align}
    Z_{\text{model}}^{(\Lambda_{\mathrm{cond.}})}[B] &= \sum_{\substack{\vec{\Sigma} \in C_2(M_4; \Lambda_{\mathrm{weights}})  \\ \vec{C} \in C_1(M_4; \Lambda_{\mathrm{cond.}}) \\
    \text{satisfying~} \vec{C} = \partial \vec{\Sigma}}} \rme^{-S_{\mathrm{vortex}}[\vec{\Sigma}] - S_{\mathrm{monopole}}[\vec{C}]} \left( \sum_{b \in H^2 (M_4;\mathbb{Z}_N)} \rme^{\frac{2\pi \im }{N} b(\vec{\Sigma} -  \hat{\Sigma}_B) }\right) \notag\\
    &= \sum_{\substack{\vec{\Sigma} \in C_2(M_4; \Lambda_{\mathrm{weights}})  \\ \vec{C} \in C_1(M_4; \Lambda_{\mathrm{cond.}}) \\
    \text{satisfying~} \vec{C} = \partial \vec{\Sigma}}} \rme^{-S_{\mathrm{vortex}}[\vec{\Sigma}] - S_{\mathrm{monopole}}[\vec{C}]}
    \,\,\delta_{H_2(M_4;\mathbb{Z}_N)}
    \left( [\vec{\Sigma}] - \Sigma_B\right).
\end{align}
We note that the sum over $\vec{C}$ is restricted to $C_1(M_4;\Lambda_{\mathrm{cond.}})\subset C_1(M_4;\Lambda_{\mathrm{roots}})$ compared with \eqref{eq:center-vortex-model}. 
Because of this restriction and the constraint $\vec{C}=\partial\vec{\Sigma}$ of the magnetic flux conservation, 
the twisted boundary condition implies,
\begin{align}
    Z_{\mathrm{model}}^{(\Lambda_{\mathrm{cond.}})}[B] = 0 ~~~\text{if }[\partial \hat{\Sigma}_B] \notin [Z_1(M_4;\Lambda_{\mathrm{cond.}})],
\end{align}
where $Z_1(M_4;\Lambda_{\mathrm{cond.}}) = \{ \vec{C} \in C_1(M_4;\Lambda_{\mathrm{cond.}})| \partial \vec{C}=0 \}$, and $[\cdot]$ stands for the equivalence class in $H_1(M_4;\Lambda_{\mathrm{roots}})$;  $[Z_1(M_4;\Lambda_{\mathrm{cond.}})]=\{[\vec{C}]\in H_1(M_4;\Lambda_{\mathrm{roots}})\,|\, \vec{C}\in Z_1(M_4;\Lambda_{\mathrm{cond.}})\}$.
For simplicity, we here assume the condensation of all (fundamental) center vortices and compute the twisted partition function on $M_4 = L(qN,1) \times S^1$ with the minimal background $B$ satisfying $\beta B\neq 0$ as in Def.~\ref{def:lens-space-criterion}.

We note that $\beta B\in \Lambda_{\mathrm{roots}}/(qN \Lambda_{\mathrm{roots}})=\mathrm{Tor}\, H^3(L(qN,1) \times S^1,\Lambda_{\mathrm{roots}})$, 
%$= \Lambda_{\mathrm{roots}}/(qN \Lambda_{\mathrm{roots}}) \oplus \Lambda_{\mathrm{roots}}$.
%In terms of the Poincar\'e dual, 
and thus $[\partial \hat{\Sigma}_B]\in \Lambda_{\mathrm{roots}}/(qN \Lambda_{\mathrm{roots}})=\mathrm{Tor}\, H_1(L(qN,1) \times S^1,\Lambda_{\mathrm{roots}})$. 
%$ = \Lambda_{\mathrm{roots}}/(qN \Lambda_{\mathrm{roots}}) \oplus \Lambda_{\mathrm{roots}}$.
In the sector $\Lambda_{\mathrm{roots}}/(qN \Lambda_{\mathrm{roots}})$, the condensed magnetic charge lattice $\subset [Z_1(M_4;\Lambda_{\mathrm{cond.}})]$ is given by
\begin{align}
    \Lambda_{\mathrm{cond.}}/(qN \Lambda_{\mathrm{roots}}) := \{ [\vec{\alpha}]_{\operatorname{mod} qN \Lambda_{\mathrm{roots}} } \in \Lambda_{\mathrm{roots}}/(qN \Lambda_{\mathrm{roots}}) ~|~ \vec{\alpha} \in \Lambda_{\mathrm{cond.}} \}.
\end{align}
with the projection $[\cdot]_{\operatorname{mod} qN}:  \Lambda_{\mathrm{roots}} \rightarrow \Lambda_{\mathrm{roots}}/(qN \Lambda_{\mathrm{roots}})$.\footnote{
Note that this definition includes the case when $(qN \Lambda_{\mathrm{roots}})$ is not a sublattice of $\Lambda_{\mathrm{cond.}}$.
For example, when $\Lambda_{\mathrm{cond.}} = 2 \Lambda_{\mathrm{roots}}$ and $Nq =3$, we have $(2\Lambda_{\mathrm{roots}})/(3 \Lambda_{\mathrm{roots}})= \Lambda_{\mathrm{roots}}/(3 \Lambda_{\mathrm{roots}})$.}
On the other hand, if we choose a minimal nontrivial $B$, we can write $[\partial \hat{\Sigma}_B] = q N\vec{\mu}_1 \in \mathrm{Tor}\, H_1(L(qN,1) \times S^1,\Lambda_{\mathrm{roots}})$ (see Appendix \ref{app:Lens_space} for details).
We then obtain that
\begin{align}
    Z_{\mathrm{model}}^{(\Lambda_{\mathrm{cond.}})}[B] = 0 ~~~\text{if }[q N\vec{\mu}_1] \notin \Lambda_{\mathrm{cond.}}/(qN \Lambda_{\mathrm{roots}}). 
    \label{eq:higher_charge_and_lensspace}
\end{align}
Using this formula~\eqref{eq:higher_charge_and_lensspace}, the $q$-dependence of the lens-space twisted partition function tells us the information of the condensed magnetic charge $\Lambda_{\mathrm{cond.}}$ at least partially.

To be more specific, let us consider the case, 
\begin{align}
    \Lambda_{\mathrm{cond.}} = p\Lambda_{\mathrm{roots}},
\end{align}
for some $p>0$, and we see the $q$-dependence of the lens-space  twisted partition function $\{ Z_{L(qN;1)\times S^1}[B] \}_{q\in \mathbb{Z}_{> 0}}$:
% This reveals the correspondence between our definition (Definition \ref{def:lens-space-criterion}) of charge-$[q]_{Nq\mathbb{Z}}$ monopole condensation and the standard ``charge-$p$''monopole condensation $\Lambda_{\mathrm{cond.}} = p\Lambda_{\mathrm{roots}}$.
\begin{itemize}
    \item Let us start with the easiest case, $\gcd(N,p)=1$, where the twisted partition function does not vanish for all $q>0$:
\begin{align}
    Z_{L(qN;1)\times S^1}[B] \rightarrow \text{const.} \neq 0~~~\text{for all }q.
\end{align}
    To show this, let us compute the condensed magnetic-charge lattice modulo $qN\Lambda_{\mathrm{roots}}$:
\begin{align}
    \Lambda_{\mathrm{cond.}}/(qN \Lambda_{\mathrm{roots}}) 
    &= (p\Lambda_{\mathrm{roots}})/(qN \Lambda_{\mathrm{roots}}) \notag\\
    &= (\operatorname{gcd}(qN,p)\Lambda_{\mathrm{roots}})/(qN \Lambda_{\mathrm{roots}})\notag\\
    &=(\operatorname{gcd}(q,p)\Lambda_{\mathrm{roots}})/(qN \Lambda_{\mathrm{roots}}). 
\end{align}    
    We have used the B\'ezout's identity to obtain the second line, and then we note that $\gcd(qN,p)=\gcd(q,p)$ when $\gcd(q,p)=1$ for the third line. 
Since $(N\vec{\mu}_1) \in \Lambda_{\mathrm{roots}}$, the magnetic charge imposed from the twisted boundary condition stays within the condensed magnetic charge:
\begin{align}
    [q N\vec{\mu}_1] \in \Lambda_{\mathrm{cond.}}/(qN \Lambda_{\mathrm{roots}}),~~~\text{for all }q,~\text{if }\operatorname{gcd}(N,p)=1. 
\end{align}

    Hence, from the twisted partition function, we cannot distinguish the condensates of  $\Lambda_{\mathrm{cond.}} = p\Lambda_{\mathrm{roots}}$ and $\Lambda_{\mathrm{cond.}} = \Lambda_{\mathrm{roots}}$ if $\operatorname{gcd}(N,p)=1$.
    This result would be reasonable because we only use $\mathbb{Z}_N^{[1]}$ symmetry for the twisted boundary condition.

    \item For $\operatorname{gcd}(N,p) \neq 1$, the twisted partition function would exhibit the following pattern for minimal $B=A_{\mathbb{Z}_N}\cup \delta(x_4)\diff x_4$:
\begin{align}
    Z_{L(qN;1)\times S^1}[B] \rightarrow 
    \begin{cases}
        \neq 0   ~~~&(q=0 ~\operatorname{mod}C_{N,p})\\
        =0   ~~~&(q\neq 0 ~\operatorname{mod}C_{N,p}),
    \end{cases}
\end{align}
where the integer $C_{N,p}$ is the maximal divisor of $p$ whose prime factors divide $N$, i.e. $C_{N,p} = \operatorname{gcd}(N^K,p)$ with $K\gg 1$.
We can prove this statement as follows:
\begin{align}
    Z_{L(qN;1)\times S^1}[B]\not\to 0 
    \Leftrightarrow 
    &~ [q N\vec{\mu}_1] \in \Lambda_{\mathrm{cond.}}/(qN \Lambda_{\mathrm{roots}})=(\operatorname{gcd}(Nq,p)\Lambda_{\mathrm{roots}})/(qN \Lambda_{\mathrm{roots}})\notag\\
    \Leftrightarrow
    &~ \gcd(Nq,p)| q \notag\\
    \Leftrightarrow
    &~ C_{N,p}|q. 
\end{align}
The equivalence on the last line can be obtained by considering the prime decomposition of $N,p,q$ and comparing the exponent of each prime factor.

Thus, the $q$-dependence of the lens-space partition function detects $C_{N,p}=\gcd(N^K,p)$.
The information contained in $p$ is preserved for the prime factors of $N$, but lost for the factors of $p$ that do not divide $N$.
For $\operatorname{gcd}(N,p)=1$, all information of $p$ is lost.
\end{itemize}

In the latter case, we have a theory with the following properties:
\begin{align}
    Z_{T^4}[B] &\rightarrow \text{const.} \neq 0, \\
    Z_{L(qN;1)\times S^1}[B] &\rightarrow 
    \begin{cases}
        \neq 0   ~~~&(q=0 ~(\operatorname{mod}C_{N,p}))\\
        =0   ~~~&(q\neq 0 ~(\operatorname{mod}C_{N,p})),
    \end{cases}
\end{align}
Let us emphasize that this pattern does not exist in the conventional Wilson-'t Hooft classification~\cite{tHooft:1979rtg}.
Usually, in the exotic gapped phases with nontrivial magnetic charges, we suppose that the dyon condensation pattern automatically fixes the center-vortex condensation pattern.
However, in the present example, the center vortex is fully condensed, whereas the magnetically charged object shows the nontrivial condensation pattern.
Thus the higher-charge condensation $\Lambda_{\mathrm{cond.}}=p\Lambda_{\mathrm{roots}}$ with $\gcd(N,p)\not=1$ gives the gapped phase beyond the conventional classification.

Its TQFT description requires us to consider the symmetry fractionalization of the $\mathbb{Z}_N^{[1]}$ symmetry, which is given by the level-$p$ BF theory with the magnetic coupling,
\begin{align}
    Z_{M_4}^{\mathrm{BF}_p}[B] 
    &= \int \mathcal{D}b \mathcal{D}a ~\exp\left({\frac{\im p}{2\pi}  \int_{M_4} da \cup b +  \im \int_{M_4} a \cup \beta_{\mathbb{Z}} B}\right) \notag\\
    &= \sum_{a \in H^1(M_4;\mathbb{Z}_p)} \exp\left({ \frac{2 \pi \im}{p} \int_{M_4} a \cup \beta_{\mathbb{Z}} B}\right). 
\end{align}
We here use the standard geometric normalization for $U(1)$ $1$- and $2$-form gauge fields, $a$, $b$, for the first line, 
and we integrate out $b$ to obtain the second line with changing the convention as $a\to \frac{2\pi}{p}a$ so that $a\in H^1(M_4,\mathbb{Z}_p)$. 
% where $\beta_{\mathbb{Z}}: H^2(M_4;\mathbb{Z}_N) \rightarrow H^3(M_4;\mathbb{Z})$ is the Bockstein of $0 \rightarrow \mathbb{Z} \xrightarrow{\times N}  \mathbb{Z} \xrightarrow{\operatorname{mod} N}   \mathbb{Z}_N  \rightarrow 0$.
% Indeed, $H^3(L(qN;1)\times S^1;\mathbb{Z}) = \mathbb{Z} \oplus \mathbb{Z}_{qN}$, and the Bockstein map $\beta_{\mathbb{Z}}$ is the homomorphism from $ \mathbb{Z}_N \subset H^2(L(qN;1)\times S^1;\mathbb{Z}_N) $ to $\mathbb{Z}_{qN} \subset H^3(M_4;\mathbb{Z})$.
For our choice of $B$, we can write $\beta_{\mathbb{Z}} B=q \in \mathbb{Z}_{qN} \subset H^3(M_4;\mathbb{Z})$, which leads to
%In the level-$p$ BF theory, this leads to
\begin{align}
    Z_{L(qN;1)\times S^1}^{\mathrm{BF}_p}[B] = \delta_{\operatorname{mod} \operatorname{gcd}(Nq,p)}(q)    \begin{cases}
        \neq 0   ~~~&(q=0 ~(\operatorname{mod}C_{N,p}))\\
        =0   ~~~&(q\neq 0 ~(\operatorname{mod}C_{N,p})).
    \end{cases}
\end{align}
Thus, the $q$-dependence of the lens-space twisted partition function is correctly reproduced. 
In this theory, while the center vortex is fully condensed, some nontrivial selection rule exists for the condensing monopoles.
This is the situation where the magnetic symmetry is emergent and spontaneously broken, leading to a nontrivial TQFT while preserving the original electric $\mathbb{Z}_N^{[1]}$ symmetry. This is the mechanism to obtain the gapped phase beyond the conventional Wilson-'t Hooft classification in this center-vortex framework.

Lastly, let us briefly comment on the case when the proliferated center-vortex flux is also restricted. 
So far, we have assumed the center symmetry is fully preserved and the 't~Hooft flux $B$ is chosen to be a minimal one.
However, when the condensed magnetic charge is nontrivial as $\Lambda_{\mathrm{cond.}} = p\Lambda_{\mathrm{roots}}$, it is often the case that the center symmetry $\mathbb{Z}_N^{[1]}$ is partially broken to $\mathbb{Z}_{N/\operatorname{gcd}(N,p)}^{[1]}$, and then the vacuum fits into the conventional Wilson-'t~Hooft classification. 
In the center-vortex model~\eqref{eq:center-vortex-model}, this can be achieved by requiring the condensed center vortex carries the restricted magnetic fluxes: 
\begin{align}
    \vec{\Sigma}\in C_2(M_4; \gcd(N,p)\Lambda_{\mathrm{weights}}). 
\end{align}
Then, Wilson loops $W^{N/\gcd(N,p)}(C)$ do not rotate their phases by the restricted center vortices and thus they are naturally deconfined: $\mathbb{Z}_N^{[1]}\xrightarrow{\mathrm{SSB}}\mathbb{Z}_{N/\gcd(N,p)}^{[1]}$. 
The minimal magnetic charge forced by the twisted boundary condition is $\operatorname{gcd}(N,p) q N\vec{\mu}_1$.
We can see that $\operatorname{gcd}(N,p) q N\vec{\mu}_1 \in (\operatorname{gcd}(Nq,p)\Lambda_{\mathrm{roots}})/(qN \Lambda_{\mathrm{roots}})$ for all $q$.
Hence, the spontaneous breaking $\mathbb{Z}_N^{[1]} \rightarrow \mathbb{Z}_{N/\operatorname{gcd}(N,p)}^{[1]}$ with at most SPT stacking, i.e., $Z[B] \propto \delta(\operatorname{gcd}(N,p) B)$, is consistent with charge-$p$ monopole condensation, without requiring any extra condition on the Bockstein element.

\section{Summary and Discussion}
\label{sec:discussion}

\paragraph{Summary:}
In this paper, we give several gauge-invariant criteria for the center-vortex and monopole condensation using the $\mathbb{Z}_N^{[1]}$-twisted partition function:
Weak and strong center-vortex condensations (Defs.~\ref{def:center-vortex-condensation} and~\ref{def:center-vortex-condensation-strong}, respectively) are defined by the torus twisted partition function, and weak and strong monopole condensations (Def.~\ref{def:lens-space-criterion} and~\ref{def:lens-space-criterion-strong}, respectively) are defined by the lens-space twisted partition function.
After explaining why these are sensible definitions by computing them on the adjoint Higgs phase, we discuss the relation between these condensation criteria:
For the gapped phases with locality and finiteness, we have proven the equivalence between weak center-vortex condensation and weak monopole condensation.
If we further assume the remote detectability in the low-energy TQFT, we can also show the equivalence between the strong monopole condensation and the unbroken $\mathbb{Z}_N^{[1]}$ symmetry.
The outcome can be summarized as follows:
\begin{align}
    &\text{Unique gapped vacuum} \notag\\
    \Rightarrow &~
    \text{Def.}~\ref{def:center-vortex-condensation-strong}~~\text{(strong  center-vortex condensation)} \notag\\
    \Rightarrow
    & 
    \left\{\begin{array}{l}
        \text{Unbroken $\mathbb{Z}_N^{[1]}$ symmetry} ~~\Leftrightarrow \\
        \text{Def.}~\ref{def:lens-space-criterion-strong}~~\text{(strong charge-$[q]_{Nq\mathbb{Z}}$ monopole condensation for some $q>0$)}  
    \end{array}\right.\notag\\
    \Rightarrow& 
    \left\{\begin{array}{l}
        \text{Def.}~\ref{def:center-vortex-condensation}~~ \text{(weak center-vortex condensation)}~~\Leftrightarrow\\
         \text{Def.}~\ref{def:lens-space-criterion}~~\text{(weak charge-$[q]_{Nq\mathbb{Z}}$ monopole condensation for some $q>0$)}.
    \end{array}\right.  \tag{\ref{eq:summary_relation}}
\end{align}

As an illustrative example to understand the results, we consider a formal-sum representation of the center-vortex percolating model with monopole worldlines.
This example indicates that the system characterized by center-vortex condensation in the absence of monopole condensation can be regarded as a limit of the Coulomb phase, which violates the finiteness assumption as a gapped phase.
This result is consistent with the recent observation~\cite{Nguyen:2024ikq, Giansiracusa:2025hfj} by Tin Sulejmanpasic and his collaborators.
We also discuss the case of higher-magnetic-charge condensation using the center-vortex percolating model, and we obtain a nontrivial $q$-dependence in the lens-space twisted partition function.
It is worthwhile to note that such a phase is beyond the conventional Wilson-'t~Hooft classification, which is described by TQFTs with the nontrivial symmetry fractionalization of the $\mathbb{Z}_N^{[1]}$ symmetry.
%the correspondence between our charge-$[q]{Nq\mathbb{Z}}$ monopole condensation and the standard ``charge-$p$’'monopole condensation $\Lambda_{\mathrm{cond.}} = p\Lambda_{\mathrm{roots}}$.
% It is worthwhile to note that some gapped phases with nontrivial charge-$[q]_{Nq\mathbb{Z}}$ monopole condensation exist, which is beyond the conventional Wilson-'t~Hooft classification.

\vspace{0.5em}
In what follows, let us comment on several related topics in the discussion.

\paragraph{Deconfined but weak center-vortex/monopole condensed phase:}

As indicated in the relation~\eqref{eq:summary_relation} for the condensation criteria, the unbroken $\mathbb{Z}_N^{[1]}$ symmetry implies the weak center-vortex/monopole condensation, but the converse does not necessarily hold.

Let us provide an example that satisfies the weak center-vortex condensation criterion but supports the deconfined $\mathbb{Z}_N^{[1]}$-charged lines.
Let us consider the discrete gauge theory with the finite Heisenberg gauge group $H_N$, which is generated by the $SU(N)$ clock and shift matrices $C$ and $S$, i.e. $H_N=\langle S,C \,|\, S^N=C^N=(-1)^{N-1}\bm{1}_N,  SC=\rme^{2\pi\im/N}CS\rangle$. For $N=2$, it is the quaternion group $Q_8$.
This gauge group fits into the short exact sequence,
\begin{align}
1\to \mathbb{Z}_N\to H_N\to \mathbb{Z}_N\times \mathbb{Z}_N\to 1,
\end{align}
so it has the $\mathbb{Z}_N^{[1]}$ symmetry as the center symmetry, and the Wilson lines are deconfined charged objects.
As the 't Hooft flux on $T^2$ can be saturated by a flat configuration of this discrete gauge group, the $\mathbb{Z}_N^{[1]}$-twisted partition function does not vanish: On $T^4$, we can parametrize the holonomy along each cycle as $g_i=\rme^{2\pi \im n_i/N}C^{a_i}S^{b_i}\in H_N$, which satisfy the commutation relation $g_ig_j = \rme^{\frac{2\pi\im }{N}(a_ib_j-b_i a_j)}g_j g_i$.
Since this $\mathbb{Z}_N$ phase should be identical with the background gauge field $B$, we find (here $(a\wedge b)_{ij}:=a_i b_j-b_i a_j$)
\begin{equation}
    \frac{Z_{T^4}^{(H_N)}[B]}{Z^{(H_N)}_{T^4}[0]}=\frac{|\{(a,b)\in \mathbb{Z}_N^4\times \mathbb{Z}_N^4\,|\, a\wedge b=B\}|}{|\{(a,b)\in \mathbb{Z}_N^4\times \mathbb{Z}_N^4\,|\, a\wedge b=0\}|}.
\end{equation}
For $B_{34}\not =0$ with other $B_{ij}=0$, this ratio is nonzero, and the weak center-vortex condensation occurs (so does the weak monopole condensation), even though the Wilson loops are deconfined.
Whereas discrete gauge theories may have a center vortex along the global cycles without action cost, this fact does not always imply the unbroken $\mathbb{Z}_N^{[1]}$ symmetry.

\paragraph{$\theta$ dependence, anomaly, and monopole}

In this paper, we have shown the monopole condensation from the center-vortex condensation with the assumption of the mass gap and locality.
For the case of the $SU(N)$ Yang-Mills theory with adjoint matter, there exists the $\theta$ parameter, and the existence of monopoles in the confinement vacua can also be expected from the $\theta$ dependence.
We note that $\theta$ is a $2\pi$ periodic parameter, but it has the mixed 't~Hooft anomaly with the $\mathbb{Z}_N^{[1]}$ symmetry~\cite{Gaiotto:2017yup},
\begin{align}
    Z_{\theta+ 2\pi}[B] = \rme^{\frac{2 \pi \im}{N} \int_{M_4} \frac{1}{2} P_2(B)} Z_{\theta}[B].
\label{eq:theta_anomaly}
\end{align}
This anomaly cannot be matched by a single TQFT description without local degeneracy unless $Z[B]=0$ for $P_2(B)\not=0$, because finite TQFTs cannot have nontrivial continuous parameter dependence.
Since $H_N\subset SU(N)$, we note that the above $H_N$ gauge theory should satisfy the anomaly relation as it is the low-energy effective theory for the Higgsing $SU(N)\xrightarrow{\mathrm{Higgs}}H_N$, and it is actually the case with setting $Z_{\theta}[B]=Z^{(H_N)}[B]$ because $Z^{(H_N)}[B]=0$ when $B\cup B\not=0$.

Assuming the unbroken $\mathbb{Z}_N^{[1]}$ symmetry,
this anomaly~\eqref{eq:theta_anomaly} implies at least some nontrivial $\theta$ dependence, typically leading to the multi-branch structure~\cite{Witten:1980sp, DiVecchia:1980yfw}.
This, in particular, requires the nontrivial $\theta$ dependence of the ground state energy.
If we further assume the Abelianization $SU(N) \rightarrow U(1)^{N-1}$, the topological charge can be written in the following form;
\begin{align}
    Q_{\mathrm{top}} = \frac{1}{8\pi^2}\int_{M_4} \vec{f} \wedge \vec{f},
\end{align}
where $\vec{f} $ denotes the $U(1)^{N-1}$ field strength.
If monopoles are totally absent in the Abelianized description, $Q_{\mathrm{top}}=0$ on $S^4$ and thus the ground-state energy density is independent of $\theta$.
To be consistent with the anomaly relation~\eqref{eq:theta_anomaly}, the existence of monopoles is necessary for producing a nontrivial $\theta$ dependence for the $\mathbb{Z}_N^{[1]}$-unbroken gapped phase.
Although this argument is not strong enough to imply the ``condensation’’ of monopoles, their necessity can be understood in a simpler way.

The same argument applies to the center-vortex picture~\cite{Engelhardt:1999xw, Reinhardt:2001kf, Cornwall:1999xw}.
Because $\vec{\Sigma} $ is nothing but the magnetic flux of the $U(1)^{N-1}$ gauge group,  the topological charge in this model is given by
\begin{align}
    Q_{\mathrm{top}} = \operatorname{Intersection} (\vec{\Sigma} , \vec{\Sigma}).
    \label{eq:Qtop_vortex_model}
\end{align}
The non-zero topological charge arises from two contributions: (1) the intersection of global fluxes, and (2) the linking between the monopole worldlines and the magnetic fluxes.\footnote{Note that a small instanton can be realized by a center vortex winding around a monopole worldline. Such a localized center-vortex worldsheet cannot be removed by a continuous deformation, since the monopole constitutes a singularity where the abelianization breaks down.}
Because there should be local objects with non-zero topological charge for the nontrivial $\theta$ dependence, the monopoles are necessary in the vacuum configurations.

At the local intersection of center-vortex worldsheets, the topological charge density~\eqref{eq:Qtop_vortex_model} is given by the inner product of weight vectors for those magnetic fluxes, which is quantized in $1/N$.
The local fractional feature of the topological charge density provides a natural scenario for the multi-branch $\theta$ vacua~\cite{Cornwall:1999xw}:
The total topological charge is quantized to integers, but the local constituents are fractionalized and independently liberated.
This mechanism for the multi-branch structure is explicitly realized in the semiclassical descriptions of confinement in the compactified setups~\cite{Tanizaki:2022ngt, Tanizaki:2022plm, Hayashi:2023wwi, Hayashi:2024gxv, Hayashi:2024qkm, Hayashi:2024psa, Hayashi:2025doq, Yamazaki:2017ulc, Cox:2021vsa, Unsal:2007vu, Unsal:2007jx, Unsal:2008ch, Shifman:2008ja, Poppitz:2012nz, Poppitz:2012sw, Davies:2000nw}.
Actually, the tight connection between monopoles and center vortices in the semiclassical setup~\cite{Hayashi:2024yjc} provided the authors with the strong motivation to revisit their relation, which triggered this study.

\paragraph{Future outlooks:}

Lastly, let us mention miscellaneous topics as possible future directions.
The leading motivation for our proposal is to enforce the solitonic excitations using the symmetry-twisted boundary condition in the symmetry-broken phases.
In the context of the generalized symmetry, the representation theory for solitons has been developed in Ref.~\cite{Gagliano:2025gwr, Cordova:2024iti},
and it would be intriguing to explore the representation theory of center vortices and monopoles through the lens of twisted partition functions.

We also note that the computation of the partition function on $L(p,1)\times S^1$ has appeared in the context of the superconformal index for $4$d gauge theories~\cite{Benini:2011nc, Razamat:2013jxa,  Razamat:2013opa}.
To the best of our knowledge, our proposal is the first one to interpret the lens-space twisted partition function as the criterion for the monopole condensation in $(3+1)$d, but it is definitely an interesting question to revisit these computations of the supersymmetric gauge theories to learn if we can obtain a new insight into the monopoles and dyons.

According to~\eqref{eq:lens_partition_function_trace},
the lens-space twisted partition function counts the number of ground states on $\mathcal{H}_0(L(qN;1))$ weighted with the $1$-form symmetry charge.
In a recent work~\cite{Anber:2026pfg}, Mohamed Anber revisited the path integral of the Maxwell theory on the asymptotically locally Euclidean (ALE) spaces~\cite{Bianchi:1996zj}.
The key observation there is that the $A$-type ALE space has the lens-space asymptotic boundary, and the path integral on the ALE space should be understood as the state vector on the lens space, which clarifies the meaning of the nontrivial transformation properties under the electromagnetic duality transformations.
It is curious if the computation of TQFTs with the $\mathbb{Z}_N^{[1]}$ symmetry on the ALE spaces makes the fruitful connection between our studies and these results.

As mentioned in the main text, the $(3+1)$d TQFTs with the remote detectability are strongly constrained, and they are described by the discrete gauge theories.
This allows us to perform the computation of the twisted partition functions in a concrete manner for various manifolds, and it would be useful to investigate the connection between our argument and the classification of gapped phases through those computations.
In particular, we should revisit the validity of the conventional wisdom in the dual superconductor picture and investigate possible counterexamples and/or extensions from the modern viewpoints.

\acknowledgments

This work was initiated by the conversation when one of the authors (Y.T.) was preparing the lecture material for the SNU-APCTP winter school at Seoul National University, and we thank the organizers for giving us the opportunity triggering this research. 
To complete this work, the discussion during the Yukawa Institute workshops, ``Interfaces \& Symmetry'' (YITP-I-25-04), YITP International School on EIC Physics (YITP-W-25-16), and  ``Buenas Ideas on the QCD Phase Diagram'' (YITP-26-4), were quite valuable. 
This work was partially supported by Japan Society for the Promotion of Science (JSPS)
Research Fellowship for Young Scientists Grant No. 23KJ1161 (Y.H.), by JSPS KAKENHI
Grant No. 26K07106 (Y.T.), and by Center for Gravitational Physics and Quantum Information (CGPQI) at Yukawa Institute for Theoretical Physics. 

\appendix

\section{Technical details on the lens space \texorpdfstring{$L(M;1)$}{L(M;1)} and Bockstein map}
\label{app:Lens_space}

In this appendix, we summarize the topological properties of $L(M;1)$ and $L(M;1) \times S^1$. We particularly utilize the case when $M$ is a multiple of $N$ in the main text.

\subsection{Lens space}
The lens space $L(M;1) $ can be constructed by $S^3/\mathbb{Z}_M$. 
To be more specific, let us regard $S^3=\{(z,w)\in \mathbb{C}^2 \;|\; |z|^2+|w|^2=1\}$ and we take the $\mathbb{Z}_M$ action $\rho:(z,w)\mapsto (\rme^{2\pi\im/M}z, \rme^{2\pi \im/M}w)$. 
As this is the free action, we can define a smooth $3$-manifold by taking its quotient, i.e.
\begin{align}
    L(M; 1) &:= \left\{ (z, w) \in \mathbb{C}^2 \;\middle|\; |z|^2 + |w|^2 = 1 \right\} \Big/ \sim \nonumber \\
    &\qquad \text{where} \quad (z, w) \sim \left( e^{2\pi i / M}z, e^{2\pi i / M}w \right) .
\end{align}
For the later purpose, it is convenient to describe its CW-decomposition structure. We first take the cellular decomposition of $S^3$, where the $3,2,1,0$-cells $e^3_k, e^2_k, e^1_k$ and $e^0_k$ with $k=0,1,\ldots,M-1$ are given by (see \cite[Example~2.43]{Hatcher:0}): 
\begin{align}
    e^3_k&=\left\{ \frac{2\pi k}{M}< \arg(z) < \frac{2\pi (k+1)}{M}, \, 0<|z|\le 1\right\}, \notag\\
    e^2_k&=\left\{\arg(z)=\frac{2\pi k}{M},\, 0<|z|\le 1\right\}, \notag\\
    e^1_k&= \left\{z=0,\, \frac{2\pi k}{M}< \arg(w) < \frac{2\pi (k+1)}{M}\right\}, \notag\\
    e^0_k&=\{(0,\rme^{2\pi \im k/M})\}. 
\end{align}
The $\mathbb{Z}_M$ group action is $\rho(e^\bullet_k)=e^\bullet_{k+1}$, and thus we identify $L(M;1)=e^3_0\cup e^2_0\cup e^1_0\cup e^0_0$ as the CW complex. 
Equivalently, a lens space can be topologically constructed by gluing two solid tori along their boundary surfaces via a Dehn twist: In the above cell decomposition, $e^1_0$ should be regarded as a core $1$-cycle of a solid torus, $e^2_0$ gives the meridional disk of degree $M$, and $e^3_0$ fills the remaining region.

The homology and cohomology are listed as follows:
\begin{align}
    H^k(L(M;1), \mathbb{Z}) = H_{3-k}(L(M;1), \mathbb{Z}) =
    \begin{cases}
        \mathbb{Z}~~~&(k=0) \\
        0~~~&(k=1) \\
        \mathbb{Z}_M~~~&(k=2) \\
        \mathbb{Z}~~~&(k=3) \\
    \end{cases}
\end{align}
For the homology, the result immediately follows from the above cell decomposition: The boundary operator acts as $\partial e^3_0=e^2_1-e^2_0=0$, $\partial e^2_0=e^1_0+\cdots+e^1_{M-1}=Me^1_0$, and $\partial e^1_0=e^0_1-e^0_0=0$. Thus, the chain complex for the $\mathbb{Z}$-valued homology is $\mathbb{Z}\xrightarrow{\times 0}\mathbb{Z}\xrightarrow{\times M}\mathbb{Z}\xrightarrow{\times 0}\mathbb{Z}\to 0$, which gives the result. The cohomology follows from the Poincare duality. 

For the $\mathbb{Z}_N$ coefficient, the chain complex becomes $\mathbb{Z}_N\xrightarrow{\times 0}\mathbb{Z}_N\xrightarrow{\times M}\mathbb{Z}_N\xrightarrow{\times 0}\mathbb{Z}_N\to 0$, and the result becomes
\begin{align}
    H^k(L(M;1), \mathbb{Z}_N) = H_{3-k}(L(M;1), \mathbb{Z}_N)=
    \begin{cases}
        \mathbb{Z}_N~~~&(k=0) \\
        \mathbb{Z}_{\gcd(M,N)}~~~&(k=1) \\
        \mathbb{Z}_{\gcd(M,N)}~~~&(k=2) \\
        \mathbb{Z}_N~~~&(k=3) \\
    \end{cases}
\end{align}
One can obtain the same result by applying the universal coefficient theorem to the result of the $\mathbb{Z}$-valued (co)homology. 
For the $k=1,2$ cases, it is sometimes useful to take care of the natural normalizations of generators, and they are given by $H_1(L(M;1),\mathbb{Z}_N)=\{k[e^1_0]\;|\; k\in \mathbb{Z}_{\gcd(M,N)}\}$ and $H_2(L(M;1),\mathbb{Z}_N)=\{\frac{N k}{\gcd(N,M)}[e^2_0] \;|\; k\in \mathbb{Z}_{\gcd(M,N)}\}$, respectively. 
In particular, for $M=qN$, $\partial e^2_0=qN e^1_0=0$ mod $N$, and thus $[e^2_0]$ generates the nontrivial $2$-cycle of $L(qN;1)$ for the $\mathbb{Z}_N$-valued homology. 

Let us discuss the Bockstein map associated with $0 \rightarrow \mathbb{Z} \xrightarrow{\times N}  \mathbb{Z} \xrightarrow{\operatorname{mod} N}   \mathbb{Z}_N  \rightarrow 0$, 
\begin{align}
\beta_{ \mathbb{Z}}: H^1(L(M;1), \mathbb{Z}_N) \rightarrow H^2(L(M;1), \mathbb{Z}). 
\end{align}
This is the connecting homomorphism of the long-exact sequence, 
\begin{align}
   \cdots\xrightarrow{\times N} \underbrace{H^1(L(M;1))}_{=0}\xrightarrow{\bmod N} \underbrace{H^1(L(M;1),\mathbb{Z}_N)}_{=\mathbb{Z}_{\gcd(M,N)}}\xrightarrow{\beta_{\mathbb{Z}}}\underbrace{H^2(L(M;1))}_{=\mathbb{Z}_M}\xrightarrow{\times N} \cdots, 
\end{align}
%where the left-hand side is isomorphic to $\mathbb{Z}_{\gcd(M,N)}$ and the right-hand side is isomorphic to $\mathbb{Z}_M$. 
which shows the map $\beta_{\mathbb{Z}}$ is injective for this case. % and maps isomorphically onto the $\mathbb{Z}_{\gcd(M,N)}$ subgroup of $H^2(L(M;1), \mathbb{Z})$.
Thus, $\beta_{\mathbb{Z}}$ is nontrivial as long as $\gcd(M, N)$ is nontrivial. 
We can give the concrete expression for the homology version $\beta_{\mathbb{Z}}:H_2(L(M;1), \mathbb{Z}_N)\to H_1(L(M;1))$ as 
\begin{align}
    \beta_{\mathbb{Z}}\left(\frac{N}{\gcd(N,M)}[e^2_0]\right)= \frac{M}{\gcd(N,M)}[e^1_0], 
    \label{eq:derivaton_BocksteinHom}
\end{align}
i.e. ``$\beta_{\mathbb{Z}}=\frac{1}{N}\partial$'', and the cohomology version is obtained by taking the Poincare dual of this correspondence. Especially when $M=qN$, we obtain $\beta_{\mathbb{Z}}([e^2_0])=q[e^1_0]$. 
%(Kunneth; weight/root lattice is basically )

For the arguments in the main text, we also use the cohomology of $L(M;1)\times S^1$ so let us summarize its result, which can be obtained by the K\"unneth formula. 
With the coefficients of the root/weight lattice $\Lambda_* \simeq \mathbb{Z}^{N-1}, (*=\mathrm{roots,~weights})$,
\begin{align}
    H^k(L(M;1)\times S^1, \Lambda_*) =
    \begin{cases}
        \Lambda_*~~~&(k=0) \\
        \Lambda_*~~~&(k=1) \\
        \Lambda_*/(M\Lambda_*)~~~&(k=2) \\
        \Lambda_*/(M\Lambda_*)\oplus \Lambda_*~~~&(k=3) \\
        \Lambda_*~~~&(k=4) 
    \end{cases}
\end{align}
and, with $\mathbb{Z}_N$ coefficient,
\begin{align}
    H^k(L(M;1)\times S^1, \mathbb{Z}_N) =
    \begin{cases}
        \mathbb{Z}_N~~~&(k=0) \\
        \mathbb{Z}_{\gcd(M,N)}\oplus \mathbb{Z}_N~~~&(k=1) \\
        \mathbb{Z}_{\gcd(M,N)}\oplus \mathbb{Z}_{\gcd(M,N)}~~~&(k=2) \\
        \mathbb{Z}_N\oplus \mathbb{Z}_{\gcd(M,N)}~~~&(k=3) \\
        \mathbb{Z}_N~~~&(k=4) 
    \end{cases}
\end{align}

\subsection{More details on the Bockstein map}

We have already discussed the Bockstein homomorphism on the lens space for the short exact sequence, $0\to \mathbb{Z}\to \mathbb{Z}\to \mathbb{Z}_N\to 0$. 
We here discuss the other Bockstein homomorphism that is used for the center-vortex model: Our interest is the lift from $\mathbb{Z}_N$ to $\Lambda_{\mathrm{weights}}$, characterized by the following sequence
\begin{align}
   0 \rightarrow \Lambda_{\mathrm{roots}} \rightarrow \Lambda_{\mathrm{weights}} \xrightarrow{\text{N-ality}} \mathbb{Z}_N \rightarrow 0.
\end{align}
As our main question is whether or not $\mathbb{Z}_N $ 1-form background field can be lifted and embedded into $U(1)^{N-1}$ gauge field (without violating Bianchi identity),  
we focus on the case $M=qN$, i.e. the lens space $L(qN;1)\times S^1$, 
and we compute the Bockstein map 
\begin{align}
    \beta_{\Lambda_{\mathrm{roots}}}: H^2(L(qN;1)\times S^1, \mathbb{Z}_N)  \rightarrow H^3(L(qN;1)\times S^1, \Lambda_{\mathrm{roots}}).
\end{align}
Due to the torsion part $\Lambda_{\mathrm{roots}}/(qN\Lambda_{\mathrm{roots}})$ in $H^3(L(qN;1)\times S^1, \Lambda_{\mathrm{roots}})$, this map can be nontrivial, and we construct it explicitly using $\beta_{ \mathbb{Z}}: H^1(L(qN;1), \mathbb{Z}_N) \rightarrow H^2(L(qN;1), \mathbb{Z})$.

Let $dx_4/L_4$ denote the generator of $H^1(S^1, \mathbb{Z})$. We use the same notation for its pullback to $H^*(L(qN;1)\times S^1, \mathbb{Z})$, and retain this notation even when changing the coefficient ring appropriately via tensor products.
Similarly, we define the generator $A_{\mathbb{Z}_N} \in H^1 (L(qN;1), \mathbb{Z}_N)$ and its image $\beta_{\mathbb{Z}} (A_{\mathbb{Z}_N}) \in H^2 (L(qN;1), \mathbb{Z})$.
We should note that, while $H^2 (L(qN;1), \mathbb{Z}) \simeq \mathbb{Z}_{qN}$, the image should obey $N \beta_{\mathbb{Z}} (A_{\mathbb{Z}_N})  = 0$ (on $\mathbb{Z}_{qN}$), because the map $\beta_{\mathbb{Z}}$ is a homomorphism from $\mathbb{Z}_N$ to $\mathbb{Z}_{qN}$. 
We have already explicitly confirmed it in \eqref{eq:derivaton_BocksteinHom} on the homology side, and this is just its translation on the cohomology side. 

Now, we can explicitly write down the nontrivial part of the Bockstein map as
\begin{align}
   \beta_{\Lambda_{\mathrm{roots}}}: H^2(L(qN;1)\times S^1, \mathbb{Z}_N)  &\rightarrow H^3(L(qN;1)\times S^1, \Lambda_{\mathrm{roots}}) \notag \\
   A_{\mathbb{Z}_N} \cup (dx_4/L_4)~~ &\mapsto ~~ (N \vec{\mu}_1) \beta_{\mathbb{Z}} (A_{\mathbb{Z}_N}) \cup (dx_4/L_4). 
\end{align}
Here, we express the lift from $\mathbb{Z}_N$ to $\Lambda_{\mathrm{weights}}$ as $k \mapsto k \vec{\mu}_1$.
One might worry that this map depends on the choice of lift, but it is actually well-defined.
Indeed, in any other lift, we replace $\vec{\mu}_1$ with $\vec{\mu}_1 + \vec{\alpha}$ for some root vector $ \vec{\alpha}$, 
and then the image of the Bockstein map will be modified as
$(N (\vec{\mu}_1+\vec{\alpha})) \beta_{\mathbb{Z}} (A_{\mathbb{Z}_N}) \cup (dx_4/L_4) = N \vec{\mu}_1 \beta_{\mathbb{Z}} (A_{\mathbb{Z}_N}) \cup (dx_4/L_4) + N \vec{\alpha} \beta_{\mathbb{Z}} (A_{\mathbb{Z}_N}) \cup (dx_4/L_4)$, but the latter term vanishes because it takes a value in the $\Lambda_{\mathrm{roots}}/(Nq\Lambda_{\mathrm{roots}})$, and $N \beta_{\mathbb{Z}} (A_{\mathbb{Z}_N})  = 0~(\operatorname{mod}Nq)$.
In other words, under the identification $H^3(L(qN;1)\times S^1, \Lambda_{\mathrm{roots}}) = \Lambda_{\mathrm{roots}}/(Nq \Lambda_{\mathrm{roots}}) \oplus \Lambda_{\mathrm{roots}}$, we can regard $\beta_{\Lambda_{\mathrm{roots}}}(A_{\mathbb{Z}_N} \cup (dx_4/L_4)) = q N \vec{\mu}_1 \in \Lambda_{\mathrm{roots}}/(Nq \Lambda_{\mathrm{roots}})$, where we set $\beta_{\mathbb{Z}} (A_{\mathbb{Z}_N}) = q \in \mathbb{Z}_{qN} \subset H^{2}(L(qN;1),\mathbb{Z})$ by properly choosing a generator of $\mathbb{Z}_{qN} = H^{2}(L(qN;1),\mathbb{Z})$ as we have done for the homology computation.

\bibliographystyle{utphys}
\bibliography{./QFT,./refs}

\providecommand{\href}[2]{#2}\begingroup\raggedright\begin{thebibliography}{10}

\bibitem{Nambu:1974zg}
Y.~Nambu, ``{Strings, Monopoles and Gauge Fields},''
\href{http://dx.doi.org/10.1103/PhysRevD.10.4262}{{\em Phys. Rev.} {\bfseries D10} (1974) 4262}.
%%CITATION = PHRVA,D10,4262;%%.

\bibitem{Mandelstam:1974pi}
S.~Mandelstam, \href{http://dx.doi.org/10.1016/0370-1573(76)90043-0}{``{Vortices and Quark Confinement in Nonabelian Gauge Theories},''} in {\em {Phys. Rep. 23 (1976) 245-249, In *Gervais, J.L. (Ed.), Jacob, M. (Ed.): Non-linear and Collective Phenomena In Quantum Physics*, 12-16}}, vol.~23, pp.~245--249.
\newblock
1976.
\newblock
%%CITATION = PRPLC,23,245;%%.

\bibitem{Polyakov:1975rs}
A.~M. Polyakov, ``{Compact Gauge Fields and the Infrared Catastrophe},''
\href{http://dx.doi.org/10.1016/0370-2693(75)90162-8}{{\em Phys. Lett.} {\bfseries B59} (1975) 82--84}.
%%CITATION = PHLTA,B59,82;%%.

\bibitem{tHooft:1977nqb}
G.~'t~Hooft, ``On the phase transition towards permanent quark confinement,'' \href{http://dx.doi.org/10.1016/0550-3213(78)90153-0}{{\em Nucl.Phys.B} {\bfseries 138} (1978) 1--25}.

\bibitem{Cornwall:1979hz}
J.~M. Cornwall, ``Quark confinement and vortices in massive gauge invariant {{QCD}},'' \href{http://dx.doi.org/10.1016/0550-3213(79)90111-1}{{\em Nucl. Phys. B} {\bfseries 157} no.~UCLA/79/TEP/5, (1979) 392--412}.

\bibitem{Nielsen:1979xu}
H.~B. Nielsen and P.~Olesen, ``A quantum liquid model for the {{QCD}} vacuum: {{Gauge}} and rotational invariance of domained and quantized homogeneous color fields,'' \href{http://dx.doi.org/10.1016/0550-3213(79)90065-8}{{\em Nucl. Phys. B} {\bfseries 160} no.~NBI-HE-79-17, (1979) 380--396}.

\bibitem{Ambjorn:1980ms}
J.~Ambjorn and P.~Olesen, ``A color magnetic vortex condensate in {{QCD}},'' \href{http://dx.doi.org/10.1016/0550-3213(80)90150-9}{{\em Nucl. Phys. B} {\bfseries 170} no.~NBI-HE-80-14, (1980) 265--282}.

\bibitem{Ezawa:1982bf}
Z.~F. Ezawa and A.~Iwazaki, ``{Abelian Dominance and Quark Confinement in Yang-Mills Theories},'' \href{http://dx.doi.org/10.1103/PhysRevD.25.2681}{{\em Phys. Rev. D} {\bfseries 25} (1982) 2681}.

\bibitem{Suzuki:1988yq}
T.~Suzuki, ``{A Ginzburg-Landau Type Theory of Quark Confinement},'' \href{http://dx.doi.org/10.1143/PTP.80.929}{{\em Prog. Theor. Phys.} {\bfseries 80} (1988) 929}.

\bibitem{Suganuma:1993ps}
H.~Suganuma, S.~Sasaki, and H.~Toki, ``{Color confinement, quark pair creation and dynamical chiral symmetry breaking in the dual Ginzburg-Landau theory},'' \href{http://dx.doi.org/10.1016/0550-3213(94)00392-R}{{\em Nucl. Phys. B} {\bfseries 435} (1995) 207--240}, \href{http://arxiv.org/abs/hep-ph/9312350}{{\ttfamily arXiv:hep-ph/9312350}}.

\bibitem{Kondo:1997pc}
K.-I. Kondo, ``{Abelian projected effective gauge theory of QCD with asymptotic freedom and quark confinement},'' \href{http://dx.doi.org/10.1103/PhysRevD.57.7467}{{\em Phys. Rev. D} {\bfseries 57} (1998) 7467--7487}, \href{http://arxiv.org/abs/hep-th/9709109}{{\ttfamily arXiv:hep-th/9709109}}.

\bibitem{Kronfeld:1987vd}
A.~S. Kronfeld, G.~Schierholz, and U.~J. Wiese, ``{Topology and Dynamics of the Confinement Mechanism},'' \href{http://dx.doi.org/10.1016/0550-3213(87)90080-0}{{\em Nucl. Phys. B} {\bfseries 293} (1987) 461--478}.

\bibitem{Suzuki:1989gp}
T.~Suzuki and I.~Yotsuyanagi, ``{A possible evidence for Abelian dominance in quark confinement},'' \href{http://dx.doi.org/10.1103/PhysRevD.42.4257}{{\em Phys. Rev. D} {\bfseries 42} (1990) 4257--4260}.

\bibitem{DelDebbio:1996lih}
L.~Del~Debbio, M.~Faber, J.~Greensite, and S.~Olejnik, ``Center dominance and {{Z}}(2) vortices in {{SU}}(2) lattice gauge theory,'' \href{http://dx.doi.org/10.1103/PhysRevD.55.2298}{{\em Phys. Rev. D} {\bfseries 55} no.~LBL-39424, LBNL-39424, SWAT-96-119, (1997) 2298--2306}, \href{http://arxiv.org/abs/hep-lat/9610005}{{\ttfamily arxiv:hep-lat/9610005}}.

\bibitem{Langfeld:1998cz}
K.~Langfeld, O.~Tennert, M.~Engelhardt, and H.~Reinhardt, ``Center vortices of {{Yang-Mills}} theory at finite temperatures,'' \href{http://dx.doi.org/10.1016/S0370-2693(99)00252-X}{{\em Phys. Lett. B} {\bfseries 452} no.~UNITU-THEP-5-98, (1999) 301}, \href{http://arxiv.org/abs/hep-lat/9805002}{{\ttfamily arxiv:hep-lat/9805002}}.

\bibitem{Kovacs:1998xm}
T.~G. Kovacs and E.~T. Tomboulis, ``Vortices and confinement at weak coupling,'' \href{http://dx.doi.org/10.1103/PhysRevD.57.4054}{{\em Phys. Rev. D} {\bfseries 57} no.~UCLA-97-TEP-22, COLO-HEP-392, (1998) 4054--4062}, \href{http://arxiv.org/abs/hep-lat/9711009}{{\ttfamily arxiv:hep-lat/9711009}}.

\bibitem{Engelhardt:1999fd}
M.~Engelhardt, K.~Langfeld, H.~Reinhardt, and O.~Tennert, ``Deconfinement in {{SU}}(2) {{Yang-Mills}} theory as a center vortex percolation transition,'' \href{http://dx.doi.org/10.1103/PhysRevD.61.054504}{{\em Phys. Rev. D} {\bfseries 61} (2000) 054504}, \href{http://arxiv.org/abs/hep-lat/9904004}{{\ttfamily arxiv:hep-lat/9904004}}.

\bibitem{deForcrand:1999our}
P.~{de Forcrand} and M.~D'Elia, ``On the relevance of center vortices to {{QCD}},'' \href{http://dx.doi.org/10.1103/PhysRevLett.82.4582}{{\em Phys. Rev. Lett.} {\bfseries 82} (1999) 4582--4585}, \href{http://arxiv.org/abs/hep-lat/9901020}{{\ttfamily arxiv:hep-lat/9901020}}.

\bibitem{DelDebbio:1997ke}
L.~Del~Debbio, M.~Faber, J.~Greensite, and S.~Olejnik, ``{Center dominance, center vortices, and confinement},'' in {\em {NATO Advanced Research Workshop on Theoretical Physics: New Developments in Quantum Field Theory}}, pp.~47--64.
\newblock 6, 1997.
\newblock \href{http://arxiv.org/abs/hep-lat/9708023}{{\ttfamily arXiv:hep-lat/9708023}}.

\bibitem{Ambjorn:1999ym}
J.~Ambjorn, J.~Giedt, and J.~Greensite, ``Vortex structure versus monopole dominance in {{Abelian}} projected gauge theory,'' \href{http://dx.doi.org/10.1088/1126-6708/2000/02/033}{{\em JHEP} {\bfseries 02} (2000) 033}, \href{http://arxiv.org/abs/hep-lat/9907021}{{\ttfamily arxiv:hep-lat/9907021}}.

\bibitem{deForcrand:2000pg}
P.~de~Forcrand and M.~Pepe, ``{Center vortices and monopoles without lattice Gribov copies},'' \href{http://dx.doi.org/10.1016/S0550-3213(01)00009-8}{{\em Nucl. Phys. B} {\bfseries 598} (2001) 557--577}, \href{http://arxiv.org/abs/hep-lat/0008016}{{\ttfamily arXiv:hep-lat/0008016}}.

\bibitem{Engelhardt:1999xw}
M.~Engelhardt and H.~Reinhardt, ``{Center projection vortices in continuum Yang-Mills theory},'' \href{http://dx.doi.org/10.1016/S0550-3213(99)00727-0}{{\em Nucl. Phys. B} {\bfseries 567} (2000) 249}, \href{http://arxiv.org/abs/hep-th/9907139}{{\ttfamily arXiv:hep-th/9907139}}.

\bibitem{Reinhardt:2001kf}
H.~Reinhardt, ``{Topology of center vortices},'' \href{http://dx.doi.org/10.1016/S0550-3213(02)00130-X}{{\em Nucl. Phys. B} {\bfseries 628} (2002) 133--166}, \href{http://arxiv.org/abs/hep-th/0112215}{{\ttfamily arXiv:hep-th/0112215}}.

\bibitem{Cornwall:1999xw}
J.~M. Cornwall, ``{Center vortices, nexuses, and fractional topological charge},'' \href{http://dx.doi.org/10.1103/PhysRevD.61.085012}{{\em Phys. Rev. D} {\bfseries 61} (2000) 085012}, \href{http://arxiv.org/abs/hep-th/9911125}{{\ttfamily arXiv:hep-th/9911125}}.

\bibitem{Hayashi:2024yjc}
Y.~Hayashi and Y.~Tanizaki, ``{Unifying Monopole and Center Vortex as the Semiclassical Confinement Mechanism},'' \href{http://dx.doi.org/10.1103/PhysRevLett.133.171902}{{\em Phys. Rev. Lett.} {\bfseries 133} no.~17, (2024) 171902}, \href{http://arxiv.org/abs/2405.12402}{{\ttfamily arXiv:2405.12402 [hep-th]}}.

\bibitem{GarciaPerez:1989gt}
M.~Garcia~Perez, A.~Gonzalez-Arroyo, and B.~Soderberg, ``{Minimum Action Solutions for SU(2) Gauge Theory on the Torus With Nonorthogonal Twist},'' \href{http://dx.doi.org/10.1016/0370-2693(90)90106-G}{{\em Phys. Lett. B} {\bfseries 235} (1990) 117--123}.

\bibitem{GarciaPerez:1992fj}
M.~Garcia~Perez and A.~{Gonzalez-Arroyo}, ``Numerical study of {{Yang-Mills}} classical solutions on the twisted torus,'' \href{http://dx.doi.org/10.1088/0305-4470/26/11/015}{{\em J. Phys. A} {\bfseries 26} no.~FTUAM-92-08, (1993) 2667--2678}, \href{http://arxiv.org/abs/hep-lat/9206016}{{\ttfamily arxiv:hep-lat/9206016}}.

\bibitem{Ford:2002pa}
C.~Ford and J.~M. Pawlowski, ``{Constituents of doubly periodic instantons},'' \href{http://dx.doi.org/10.1016/S0370-2693(02)02130-5}{{\em Phys. Lett. B} {\bfseries 540} (2002) 153--158}, \href{http://arxiv.org/abs/hep-th/0205116}{{\ttfamily arXiv:hep-th/0205116}}.

\bibitem{Ford:2003vi}
C.~Ford and J.~M. Pawlowski, ``{Doubly periodic instantons and their constituents},'' \href{http://dx.doi.org/10.1103/PhysRevD.69.065006}{{\em Phys. Rev. D} {\bfseries 69} (2004) 065006}, \href{http://arxiv.org/abs/hep-th/0302117}{{\ttfamily arXiv:hep-th/0302117}}.

\bibitem{Itou:2018wkm}
E.~Itou, ``Fractional instanton of the {{SU}}(3) gauge theory in weak coupling regime,'' \href{http://dx.doi.org/10.1007/JHEP05(2019)093}{{\em JHEP} {\bfseries 05} (2019) 093}, \href{http://arxiv.org/abs/1811.05708}{{\ttfamily arxiv:1811.05708 [hep-th]}}.

\bibitem{DasilvaGolan:2022jlm}
J.~Dasilva~Gol{\'a}n and M.~Garc{\'\i}a~P{\'e}rez, ``{SU(N) fractional instantons and the Fibonacci sequence},'' \href{http://dx.doi.org/10.1007/JHEP12(2022)109}{{\em JHEP} {\bfseries 12} (2022) 109}, \href{http://arxiv.org/abs/2208.07133}{{\ttfamily arXiv:2208.07133 [hep-th]}}.

\bibitem{Wandler:2024hsq}
F.~D. Wandler, ``{Numerical fractional instantons in SU(2): center vortices, monopoles, and a sharp transition between them},'' \href{http://arxiv.org/abs/2406.07636}{{\ttfamily arXiv:2406.07636 [hep-lat]}}.

\bibitem{Tanizaki:2022ngt}
Y.~Tanizaki and M.~\"Unsal, ``{Center vortex and confinement in Yang-Mills theory and QCD with anomaly-preserving compactifications},'' \href{http://dx.doi.org/10.1093/ptep/ptac042}{{\em PTEP} {\bfseries 2022} (2022) 04A108}, \href{http://arxiv.org/abs/2201.06166}{{\ttfamily arXiv:2201.06166 [hep-th]}}.

\bibitem{Tanizaki:2022plm}
Y.~Tanizaki and M.~\"Unsal, ``{Semiclassics with \textquoteright{}t Hooft flux background for QCD with 2-index quarks},'' \href{http://dx.doi.org/10.1007/JHEP08(2022)038}{{\em JHEP} {\bfseries 08} (2022) 038}, \href{http://arxiv.org/abs/2205.11339}{{\ttfamily arXiv:2205.11339 [hep-th]}}.

\bibitem{Hayashi:2023wwi}
Y.~Hayashi, Y.~Tanizaki, and H.~Watanabe, ``{Semiclassical analysis of the bifundamental QCD on~$\mathbb{R}^2\times T^2$ with \textquoteright{}t Hooft flux},'' \href{http://dx.doi.org/10.1007/JHEP10(2023)146}{{\em JHEP} {\bfseries 10} (2023) 146}, \href{http://arxiv.org/abs/2307.13954}{{\ttfamily arXiv:2307.13954 [hep-th]}}.

\bibitem{Hayashi:2024gxv}
Y.~Hayashi, Y.~Tanizaki, and H.~Watanabe, ``{Non-supersymmetric duality cascade of QCD(BF) via semiclassics on $\mathbb{R}^2\times T^2$ with the baryon-'t Hooft flux},'' \href{http://arxiv.org/abs/2404.16803}{{\ttfamily arXiv:2404.16803 [hep-th]}}.

\bibitem{Hayashi:2024qkm}
Y.~Hayashi and Y.~Tanizaki, ``{Semiclassics for the QCD vacuum structure through T$^{2}$-compactification with the baryon-{\textquoteright}t Hooft flux},'' \href{http://dx.doi.org/10.1007/JHEP08(2024)001}{{\em JHEP} {\bfseries 08} (2024) 001}, \href{http://arxiv.org/abs/2402.04320}{{\ttfamily arXiv:2402.04320 [hep-th]}}.

\bibitem{Hayashi:2024psa}
Y.~Hayashi, T.~Misumi, and Y.~Tanizaki, ``{Monopole-vortex continuity of $ \mathcal{N} $ = 1 super Yang-Mills theory on {\ensuremath{\mathbb{R}}}$^{2}$ {\texttimes} S$^{1}$ {\texttimes} S$^{1}$ with {\textquoteright}t Hooft twist},'' \href{http://dx.doi.org/10.1007/JHEP05(2025)194}{{\em JHEP} {\bfseries 05} (2025) 194}, \href{http://arxiv.org/abs/2410.21392}{{\ttfamily arXiv:2410.21392 [hep-th]}}.

\bibitem{Guvendik:2024umd}
C.~G{\"u}vendik, T.~Schaefer, and M.~{\"U}nsal, ``{The metamorphosis of semi-classical mechanisms of confinement: from monopoles on {\ensuremath{\mathbb{R}}}$^{3}$ {\texttimes} S$^{1}$ to center-vortices on {\ensuremath{\mathbb{R}}}$^{2}$ {\texttimes} T$^{2}$},'' \href{http://dx.doi.org/10.1007/JHEP11(2024)163}{{\em JHEP} {\bfseries 11} (2024) 163}, \href{http://arxiv.org/abs/2405.13696}{{\ttfamily arXiv:2405.13696 [hep-th]}}.

\bibitem{Hayashi:2025doq}
Y.~Hayashi, Y.~Tanizaki, and M.~{\"U}nsal, ``{Center-vortex semiclassics with non-minimal {\textquoteright}t Hooft fluxes on {\ensuremath{\mathbb{R}}}$^{2}${\,}{\texttimes}{\,}T$^{2}$ and center stabilization at large N},'' \href{http://dx.doi.org/10.1007/JHEP02(2026)126}{{\em JHEP} {\bfseries 02} (2026) 126}, \href{http://arxiv.org/abs/2505.07467}{{\ttfamily arXiv:2505.07467 [hep-th]}}.

\bibitem{Yamazaki:2017ulc}
M.~Yamazaki and K.~Yonekura, ``{From 4d Yang-Mills to 2d $\mathbb{CP}^{N-1}$ model: IR problem and confinement at weak coupling},'' \href{http://dx.doi.org/10.1007/JHEP07(2017)088}{{\em JHEP} {\bfseries 07} (2017) 088},
\href{http://arxiv.org/abs/1704.05852}{{\ttfamily arXiv:1704.05852 [hep-th]}}.
%%CITATION = ARXIV:1704.05852;%%.

\bibitem{Cox:2021vsa}
A.~A. Cox, E.~Poppitz, and F.~D. Wandler, ``{The mixed 0-form/1-form anomaly in Hilbert space: pouring the new wine into old bottles},'' \href{http://dx.doi.org/10.1007/JHEP10(2021)069}{{\em JHEP} {\bfseries 10} (2021) 069}, \href{http://arxiv.org/abs/2106.11442}{{\ttfamily arXiv:2106.11442 [hep-th]}}.

\bibitem{Unsal:2007vu}
M.~Unsal, ``{Abelian duality, confinement, and chiral symmetry breaking in QCD(adj)},'' \href{http://dx.doi.org/10.1103/PhysRevLett.100.032005}{{\em Phys. Rev. Lett.} {\bfseries 100} (2008) 032005},
\href{http://arxiv.org/abs/0708.1772}{{\ttfamily arXiv:0708.1772 [hep-th]}}.
%%CITATION = ARXIV:0708.1772;%%.

\bibitem{Unsal:2007jx}
M.~Unsal, ``{Magnetic bion condensation: A New mechanism of confinement and mass gap in four dimensions},'' \href{http://dx.doi.org/10.1103/PhysRevD.80.065001}{{\em Phys. Rev.} {\bfseries D80} (2009) 065001},
\href{http://arxiv.org/abs/0709.3269}{{\ttfamily arXiv:0709.3269 [hep-th]}}.
%%CITATION = ARXIV:0709.3269;%%.

\bibitem{Unsal:2008ch}
M.~Unsal and L.~G. Yaffe, ``{Center-stabilized Yang-Mills theory: Confinement and large N volume independence},'' \href{http://dx.doi.org/10.1103/PhysRevD.78.065035}{{\em Phys. Rev.} {\bfseries D78} (2008) 065035},
\href{http://arxiv.org/abs/0803.0344}{{\ttfamily arXiv:0803.0344 [hep-th]}}.
%%CITATION = ARXIV:0803.0344;%%.

\bibitem{Shifman:2008ja}
M.~Shifman and M.~Unsal, ``{QCD-like Theories on R(3) x S(1): A Smooth Journey from Small to Large r(S(1)) with Double-Trace Deformations},'' \href{http://dx.doi.org/10.1103/PhysRevD.78.065004}{{\em Phys. Rev.} {\bfseries D78} (2008) 065004},
\href{http://arxiv.org/abs/0802.1232}{{\ttfamily arXiv:0802.1232 [hep-th]}}.
%%CITATION = ARXIV:0802.1232;%%.

\bibitem{Poppitz:2012nz}
E.~Poppitz, T.~Sch\"{a}fer, and M.~\"{U}nsal, ``{Universal mechanism of (semi-classical) deconfinement and theta-dependence for all simple groups},'' \href{http://dx.doi.org/10.1007/JHEP03(2013)087}{{\em JHEP} {\bfseries 03} (2013) 087},
\href{http://arxiv.org/abs/1212.1238}{{\ttfamily arXiv:1212.1238 [hep-th]}}.
%%CITATION = ARXIV:1212.1238;%%.

\bibitem{Poppitz:2012sw}
E.~Poppitz, T.~Sch\"{a}fer, and M.~\"{U}nsal, ``{Continuity, Deconfinement, and (Super) Yang-Mills Theory},'' \href{http://dx.doi.org/10.1007/JHEP10(2012)115}{{\em JHEP} {\bfseries 10} (2012) 115},
\href{http://arxiv.org/abs/1205.0290}{{\ttfamily arXiv:1205.0290 [hep-th]}}.
%%CITATION = ARXIV:1205.0290;%%.

\bibitem{Davies:2000nw}
N.~M. Davies, T.~J. Hollowood, and V.~V. Khoze, ``{Monopoles, affine algebras and the gluino condensate},'' \href{http://dx.doi.org/10.1063/1.1586477}{{\em J. Math. Phys.} {\bfseries 44} (2003) 3640--3656},
\href{http://arxiv.org/abs/hep-th/0006011}{{\ttfamily arXiv:hep-th/0006011 [hep-th]}}.
%%CITATION = HEP-TH/0006011;%%.

\bibitem{Davies:1999uw}
N.~M. Davies, T.~J. Hollowood, V.~V. Khoze, and M.~P. Mattis, ``{Gluino condensate and magnetic monopoles in supersymmetric gluodynamics},'' \href{http://dx.doi.org/10.1016/S0550-3213(99)00434-4}{{\em Nucl. Phys. B} {\bfseries 559} (1999) 123--142}, \href{http://arxiv.org/abs/hep-th/9905015}{{\ttfamily arXiv:hep-th/9905015}}.

\bibitem{Nguyen:2024ikq}
M.~Nguyen, T.~Sulejmanpasic, and M.~{\"U}nsal, ``{Phases of Theories with ZN 1-Form Symmetry, and the Roles of Center Vortices and Magnetic Monopoles},'' \href{http://dx.doi.org/10.1103/PhysRevLett.134.141902}{{\em Phys. Rev. Lett.} {\bfseries 134} no.~14, (2025) 141902}, \href{http://arxiv.org/abs/2401.04800}{{\ttfamily arXiv:2401.04800 [hep-th]}}.

\bibitem{Giansiracusa:2025hfj}
J.~Giansiracusa, D.~Lanners, and T.~Sulejmanpasic, ``{Emergent Photons and Confinement: A Numerical Study on ZN Lattice Gauge Theory},'' \href{http://dx.doi.org/10.1103/h8mn-t4fk}{{\em Phys. Rev. Lett.} {\bfseries 135} no.~22, (2025) 221901}, \href{http://arxiv.org/abs/2505.00079}{{\ttfamily arXiv:2505.00079 [hep-lat]}}.

\bibitem{Gaiotto:2014kfa}
D.~Gaiotto, A.~Kapustin, N.~Seiberg, and B.~Willett, ``{Generalized Global Symmetries},'' \href{http://dx.doi.org/10.1007/JHEP02(2015)172}{{\em JHEP} {\bfseries 02} (2015) 172},
\href{http://arxiv.org/abs/1412.5148}{{\ttfamily arXiv:1412.5148 [hep-th]}}.
%%CITATION = ARXIV:1412.5148;%%.

\bibitem{Kapustin:2014gua}
A.~Kapustin and N.~Seiberg, ``{Coupling a QFT to a TQFT and Duality},'' \href{http://dx.doi.org/10.1007/JHEP04(2014)001}{{\em JHEP} {\bfseries 04} (2014) 001},
\href{http://arxiv.org/abs/1401.0740}{{\ttfamily arXiv:1401.0740 [hep-th]}}.
%%CITATION = ARXIV:1401.0740;%%.

\bibitem{Sharpe:2015mja}
E.~Sharpe, ``{Notes on generalized global symmetries in QFT},'' \href{http://dx.doi.org/10.1002/prop.201500048}{{\em Fortsch. Phys.} {\bfseries 63} (2015) 659--682},
\href{http://arxiv.org/abs/1508.04770}{{\ttfamily arXiv:1508.04770 [hep-th]}}.
%%CITATION = ARXIV:1508.04770;%%.

\bibitem{Witten:1988hf}
E.~Witten, ``{Quantum Field Theory and the Jones Polynomial},''
\href{http://dx.doi.org/10.1007/BF01217730}{{\em Commun.Math.Phys.} {\bfseries 121} (1989) 351--399}.
%%CITATION = CMPHA,121,351;%%.

\bibitem{Atiyah:1989vu}
M.~Atiyah, ``{Topological quantum field theories},'' \href{http://dx.doi.org/10.1007/BF02698547}{{\em Inst. Hautes Etudes Sci. Publ. Math.} {\bfseries 68} (1989) 175--186}.

\bibitem{Baez:1995xq}
J.~C. Baez and J.~Dolan, ``{Higher dimensional algebra and topological quantum field theory},'' \href{http://dx.doi.org/10.1063/1.531236}{{\em J. Math. Phys.} {\bfseries 36} (1995) 6073--6105}, \href{http://arxiv.org/abs/q-alg/9503002}{{\ttfamily arXiv:q-alg/9503002}}.

\bibitem{Lurie:2009keu}
J.~Lurie, ``{On the Classification of Topological Field Theories},'' \href{http://arxiv.org/abs/0905.0465}{{\ttfamily arXiv:0905.0465 [math.CT]}}.

\bibitem{Kapustin:2010ta}
A.~Kapustin, ``{Topological Field Theory, Higher Categories, and Their Applications},''
\href{http://arxiv.org/abs/1004.2307}{{\ttfamily arXiv:1004.2307 [math.QA]}}.
%%CITATION = ARXIV:1004.2307;%%.

\bibitem{Freed:2016rqq}
D.~S. Freed and M.~J. Hopkins, ``{Reflection positivity and invertible topological phases},'' \href{http://dx.doi.org/10.2140/gt.2021.25.1165}{{\em Geom. Topol.} {\bfseries 25} (2021) 1165--1330}, \href{http://arxiv.org/abs/1604.06527}{{\ttfamily arXiv:1604.06527 [hep-th]}}.

\bibitem{Lan:2018vjb}
T.~Lan, L.~Kong, and X.-G. Wen, ``{Classification of (3+1)D Bosonic Topological Orders: The Case When Pointlike Excitations Are All Bosons},'' \href{http://dx.doi.org/10.1103/PhysRevX.8.021074}{{\em Phys. Rev. X} {\bfseries 8} no.~2, (2018) 021074}.

\bibitem{Lan:2018bui}
T.~Lan and X.-G. Wen, ``{Classification of 3+1D Bosonic Topological Orders (II): The Case When Some Pointlike Excitations Are Fermions},'' \href{http://dx.doi.org/10.1103/PhysRevX.9.021005}{{\em Phys. Rev. X} {\bfseries 9} no.~2, (2019) 021005}, \href{http://arxiv.org/abs/1801.08530}{{\ttfamily arXiv:1801.08530 [cond-mat.str-el]}}.

\bibitem{Kitaev:2006lla}
A.~Kitaev, ``{Anyons in an exactly solved model and beyond},'' \href{http://dx.doi.org/10.1016/j.aop.2005.10.005}{{\em Annals Phys.} {\bfseries 321} no.~1, (2006) 2--111},
\href{http://arxiv.org/abs/cond-mat/0506438}{{\ttfamily arXiv:cond-mat/0506438 [cond-mat.mes-hall]}}.
%%CITATION = ARXIV:cond-mat/0506438;%%.

\bibitem{Johnson-Freyd:2020usu}
T.~Johnson-Freyd, ``{On the Classification of Topological Orders},'' \href{http://dx.doi.org/10.1007/s00220-022-04380-3}{{\em Commun. Math. Phys.} {\bfseries 393} no.~2, (2022) 989--1033}, \href{http://arxiv.org/abs/2003.06663}{{\ttfamily arXiv:2003.06663 [math.CT]}}.

\bibitem{Gu:2009dr}
Z.-C. Gu and X.-G. Wen, ``{Tensor-Entanglement-Filtering Renormalization Approach and Symmetry Protected Topological Order},'' \href{http://dx.doi.org/10.1103/PhysRevB.80.155131}{{\em Phys. Rev. B} {\bfseries 80} (2009) 155131}, \href{http://arxiv.org/abs/0903.1069}{{\ttfamily arXiv:0903.1069 [cond-mat.str-el]}}.

\bibitem{Maeda:2025ycr}
J.~Maeda and Y.~Tanizaki, ``{Twisted partition functions as order parameters},'' \href{http://dx.doi.org/10.1007/JHEP08(2025)128}{{\em JHEP} {\bfseries 08} (2025) 128}, \href{http://arxiv.org/abs/2505.16546}{{\ttfamily arXiv:2505.16546 [hep-th]}}.

\bibitem{tHooft:1979rtg}
G.~'t~Hooft, ``{A Property of Electric and Magnetic Flux in Nonabelian Gauge Theories},''
\href{http://dx.doi.org/10.1016/0550-3213(79)90595-9}{{\em Nucl. Phys.} {\bfseries B153} (1979) 141--160}.
%%CITATION = NUPHA,B153,141;%%.

\bibitem{Nguyen:2023fun}
M.~Nguyen, Y.~Tanizaki, and M.~{\"U}nsal, ``{Study of gapped phases of 4d gauge theories using temporal gauging of the~$\mathbb{Z}_N$ 1-form symmetry},'' \href{http://dx.doi.org/10.1007/JHEP08(2023)013}{{\em JHEP} {\bfseries 08} (2023) 013}, \href{http://arxiv.org/abs/2306.02485}{{\ttfamily arXiv:2306.02485 [hep-th]}}.

\bibitem{tHooft:1974kcl}
G.~'t~Hooft, ``{Magnetic Monopoles in Unified Gauge Theories},''
\href{http://dx.doi.org/10.1016/0550-3213(74)90486-6}{{\em Nucl. Phys.} {\bfseries B79} (1974) 276--284}.
%%CITATION = NUPHA,B79,276;%%.

\bibitem{Polyakov:1974ek}
A.~M. Polyakov, ``{Particle Spectrum in the Quantum Field Theory},''
{\em JETP Lett.} {\bfseries 20} (1974) 194--195.
%%CITATION = JTPLA,20,194;%%.

\bibitem{Kapustin:2013uxa}
A.~Kapustin and R.~Thorngren, ``{Higher symmetry and gapped phases of gauge theories},'' \href{http://arxiv.org/abs/1309.4721}{{\ttfamily arXiv:1309.4721 [hep-th]}}.

\bibitem{Thorngren:2020aph}
R.~Thorngren, ``{Topological quantum field theory, symmetry breaking, and finite gauge theory in 3+1D},'' \href{http://dx.doi.org/10.1103/PhysRevB.101.245160}{{\em Phys. Rev. B} {\bfseries 101} no.~24, (2020) 245160}, \href{http://arxiv.org/abs/2001.11938}{{\ttfamily arXiv:2001.11938 [cond-mat.str-el]}}.

\bibitem{Chen:2014wse}
X.~Chen, F.~J. Burnell, A.~Vishwanath, and L.~Fidkowski, ``{Anomalous Symmetry Fractionalization and Surface Topological Order},'' \href{http://dx.doi.org/10.1103/PhysRevX.5.041013}{{\em Phys. Rev. X} {\bfseries 5} no.~4, (2015) 041013}, \href{http://arxiv.org/abs/1403.6491}{{\ttfamily arXiv:1403.6491 [cond-mat.str-el]}}.

\bibitem{Barkeshli:2014cna}
M.~Barkeshli, P.~Bonderson, M.~Cheng, and Z.~Wang, ``{Symmetry Fractionalization, Defects, and Gauging of Topological Phases},'' \href{http://dx.doi.org/10.1103/PhysRevB.100.115147}{{\em Phys. Rev. B} {\bfseries 100} no.~11, (2019) 115147}, \href{http://arxiv.org/abs/1410.4540}{{\ttfamily arXiv:1410.4540 [cond-mat.str-el]}}.

\bibitem{Hsin:2019fhf}
P.-S. Hsin and A.~Turzillo, ``{Symmetry-enriched quantum spin liquids in (3 + 1)$d$},'' \href{http://dx.doi.org/10.1007/JHEP09(2020)022}{{\em JHEP} {\bfseries 09} (2020) 022}, \href{http://arxiv.org/abs/1904.11550}{{\ttfamily arXiv:1904.11550 [cond-mat.str-el]}}.

\bibitem{Delmastro:2022pfo}
D.~G. Delmastro, J.~Gomis, P.-S. Hsin, and Z.~Komargodski, ``{Anomalies and symmetry fractionalization},'' \href{http://dx.doi.org/10.21468/SciPostPhys.15.3.079}{{\em SciPost Phys.} {\bfseries 15} no.~3, (2023) 079}, \href{http://arxiv.org/abs/2206.15118}{{\ttfamily arXiv:2206.15118 [hep-th]}}.

\bibitem{Hsin:2024aqb}
P.-S. Hsin, R.~Kobayashi, and C.~Zhang, ``{Fractionalization of coset non-invertible symmetry and exotic Hall conductance},'' \href{http://dx.doi.org/10.21468/SciPostPhys.17.3.095}{{\em SciPost Phys.} {\bfseries 17} no.~3, (2024) 095}, \href{http://arxiv.org/abs/2405.20401}{{\ttfamily arXiv:2405.20401 [cond-mat.str-el]}}.

\bibitem{Brennan:2025acl}
T.~D. Brennan, T.~Jacobson, and K.~Roumpedakis, ``{Consequences of symmetry fractionalization without 1-form global symmetries},'' \href{http://dx.doi.org/10.1007/JHEP11(2025)153}{{\em JHEP} {\bfseries 11} (2025) 153}, \href{http://arxiv.org/abs/2504.08036}{{\ttfamily arXiv:2504.08036 [hep-th]}}.

\bibitem{Engelhardt:1999wr}
M.~Engelhardt and H.~Reinhardt, ``{Center vortex model for the infrared sector of Yang-Mills theory: Confinement and deconfinement},'' \href{http://dx.doi.org/10.1016/S0550-3213(00)00445-4}{{\em Nucl. Phys. B} {\bfseries 585} (2000) 591--613}, \href{http://arxiv.org/abs/hep-lat/9912003}{{\ttfamily arXiv:hep-lat/9912003}}.

\bibitem{Oxman:2018dzp}
L.~E. Oxman, ``{4D ensembles of percolating center vortices and monopole defects: The emergence of flux tubes with N -ality and gluon confinement},'' \href{http://dx.doi.org/10.1103/PhysRevD.98.036018}{{\em Phys. Rev. D} {\bfseries 98} no.~3, (2018) 036018}, \href{http://arxiv.org/abs/1805.06354}{{\ttfamily arXiv:1805.06354 [hep-th]}}.

\bibitem{Junior:2019fty}
D.~R. Junior, L.~E. Oxman, and G.~M. Sim\~oes, ``{3D Yang-Mills confining properties from a non-Abelian ensemble perspective},'' \href{http://dx.doi.org/10.1007/JHEP01(2020)180}{{\em JHEP} {\bfseries 01} (2020) 180}, \href{http://arxiv.org/abs/1911.10144}{{\ttfamily arXiv:1911.10144 [hep-th]}}.

\bibitem{Junior:2022bol}
D.~R. Junior, L.~E. Oxman, and H.~Reinhardt, ``{Infrared Yang-Mills wave functional due to percolating center vortices},'' \href{http://dx.doi.org/10.1103/PhysRevD.106.114021}{{\em Phys. Rev. D} {\bfseries 106} no.~11, (2022) 114021}, \href{http://arxiv.org/abs/2211.03006}{{\ttfamily arXiv:2211.03006 [hep-th]}}.

\bibitem{tHooft:1981bkw}
G.~'t~Hooft, ``{Topology of the Gauge Condition and New Confinement Phases in Nonabelian Gauge Theories},''
\href{http://dx.doi.org/10.1016/0550-3213(81)90442-9}{{\em Nucl. Phys.} {\bfseries B190} (1981) 455--478}.
%%CITATION = NUPHA,B190,455;%%.

\bibitem{Cardy:1981qy}
J.~L. Cardy and E.~Rabinovici, ``{Phase Structure of Z(p) Models in the Presence of a Theta Parameter},''
\href{http://dx.doi.org/10.1016/0550-3213(82)90463-1}{{\em Nucl. Phys.} {\bfseries B205} (1982) 1--16}.
%%CITATION = NUPHA,B205,1;%%.

\bibitem{Cardy:1981fd}
J.~L. Cardy, ``{Duality and the Theta Parameter in Abelian Lattice Models},''
\href{http://dx.doi.org/10.1016/0550-3213(82)90464-3}{{\em Nucl. Phys.} {\bfseries B205} (1982) 17--26}.
%%CITATION = NUPHA,B205,17;%%.

\bibitem{Honda:2020txe}
M.~Honda and Y.~Tanizaki, ``{Topological aspects of $4$D Abelian lattice gauge theories with the $\theta$ parameter},'' \href{http://dx.doi.org/10.1007/JHEP12(2020)154}{{\em JHEP} {\bfseries 12} (2020) 154}, \href{http://arxiv.org/abs/2009.10183}{{\ttfamily arXiv:2009.10183 [hep-th]}}.

\bibitem{Hayashi:2022fkw}
Y.~Hayashi and Y.~Tanizaki, ``{Non-invertible self-duality defects of Cardy-Rabinovici model and mixed gravitational anomaly},'' \href{http://dx.doi.org/10.1007/JHEP08(2022)036}{{\em JHEP} {\bfseries 08} (2022) 036}, \href{http://arxiv.org/abs/2204.07440}{{\ttfamily arXiv:2204.07440 [hep-th]}}.

\bibitem{Gaiotto:2017yup}
D.~Gaiotto, A.~Kapustin, Z.~Komargodski, and N.~Seiberg, ``{Theta, Time Reversal, and Temperature},'' \href{http://dx.doi.org/10.1007/JHEP05(2017)091}{{\em JHEP} {\bfseries 05} (2017) 091},
\href{http://arxiv.org/abs/1703.00501}{{\ttfamily arXiv:1703.00501 [hep-th]}}.
%%CITATION = ARXIV:1703.00501;%%.

\bibitem{Witten:1980sp}
E.~Witten, ``{Large N Chiral Dynamics},''
\href{http://dx.doi.org/10.1016/0003-4916(80)90325-5}{{\em Annals Phys.} {\bfseries 128} (1980) 363}.
%%CITATION = APNYA,128,363;%%.

\bibitem{DiVecchia:1980yfw}
P.~Di~Vecchia and G.~Veneziano, ``{Chiral Dynamics in the Large n Limit},'' \href{http://dx.doi.org/10.1016/0550-3213(80)90370-3}{{\em Nucl. Phys. B} {\bfseries 171} (1980) 253--272}.

\bibitem{Gagliano:2025gwr}
F.~Gagliano, A.~Grigoletto, and K.~Ohmori, ``{Higher Representations and Quark Confinement},'' \href{http://arxiv.org/abs/2501.09069}{{\ttfamily arXiv:2501.09069 [hep-th]}}.

\bibitem{Cordova:2024iti}
C.~Cordova, N.~Holfester, and K.~Ohmori, ``{Representation theory of solitons},'' \href{http://dx.doi.org/10.1007/JHEP06(2025)001}{{\em JHEP} {\bfseries 06} (2025) 001}, \href{http://arxiv.org/abs/2408.11045}{{\ttfamily arXiv:2408.11045 [hep-th]}}.

\bibitem{Benini:2011nc}
F.~Benini, T.~Nishioka, and M.~Yamazaki, ``{4d Index to 3d Index and 2d TQFT},'' \href{http://dx.doi.org/10.1103/PhysRevD.86.065015}{{\em Phys. Rev. D} {\bfseries 86} (2012) 065015}, \href{http://arxiv.org/abs/1109.0283}{{\ttfamily arXiv:1109.0283 [hep-th]}}.

\bibitem{Razamat:2013jxa}
S.~S. Razamat and M.~Yamazaki, ``{S-duality and the N=2 Lens Space Index},'' \href{http://dx.doi.org/10.1007/JHEP10(2013)048}{{\em JHEP} {\bfseries 10} (2013) 048}, \href{http://arxiv.org/abs/1306.1543}{{\ttfamily arXiv:1306.1543 [hep-th]}}.

\bibitem{Razamat:2013opa}
S.~S. Razamat and B.~Willett, ``{Global Properties of Supersymmetric Theories and the Lens Space},'' \href{http://dx.doi.org/10.1007/s00220-014-2111-0}{{\em Commun. Math. Phys.} {\bfseries 334} no.~2, (2015) 661--696}, \href{http://arxiv.org/abs/1307.4381}{{\ttfamily arXiv:1307.4381 [hep-th]}}.

\bibitem{Anber:2026pfg}
M.~M. Anber, ``{$S$-duality, boundary states, and higher-form symmetries on ALE spaces},'' \href{http://arxiv.org/abs/2605.26224}{{\ttfamily arXiv:2605.26224 [hep-th]}}.

\bibitem{Bianchi:1996zj}
M.~Bianchi, F.~Fucito, G.~Rossi, and M.~Martellini, ``{Explicit construction of Yang-Mills instantons on ALE spaces},'' \href{http://dx.doi.org/10.1016/0550-3213(96)00240-4}{{\em Nucl. Phys. B} {\bfseries 473} (1996) 367--404}, \href{http://arxiv.org/abs/hep-th/9601162}{{\ttfamily arXiv:hep-th/9601162}}.

\bibitem{Hatcher:0}
A.~Hatcher, {\em {Algebraic topology}}.
\newblock Cambridge Univ. Press, Cambridge, 2000.
\newblock \url{https://pi.math.cornell.edu/~hatcher/AT/ATpage.html}.

\end{thebibliography}\endgroup

\end{document}